\newcommand{\longversion}[1]{#1}
\newcommand{\shortversion}[1]{}
\newtheorem{observation}{Observation}
\newcommand{\problemdef}[4]{\setlength\tabcolsep{2pt}
	\begin{tcolorbox}[width = \columnwidth,colback=blue!5!white,colframe=blue!75!black,arc=0pt,outer arc=0pt,boxrule=1.5pt,left =0.5em,right=0em,enhanced,attach boxed title to top center={yshift=-3.8mm,yshifttext=-.9mm},colbacktitle=red!60,
  title=\textsc{#1} #2,
  boxed title style={size=small,colframe=red!50!black}]
  \mbox{}\\[-.5ex]
		\begin{tabular}{ @{\!\!\!}l p{0.89\columnwidth} c }
			\textsf{Input:} & #3 \\[.5pt]
			\textsf{Problem:} & #4
		\end{tabular}
	\vspace{-0.55em}
	\end{tcolorbox}
}
\newcommand{\NP}{\textsf{NP}}
\newcommand{\ptime}{\textsf{P}}
\newcommand{\iffl}{if\longversion{ and only i}f }
\newcommand{\ExtGMDefAll}{\longversion{\textsc{Extension Globally Minimal Defensive Alliance}}\shortversion{\textsc{ExtGMDefAll}}\xspace}
\newcommand{\ExtLMDefAll}{\longversion{\textsc{Extension Locally Minimal Defensive Alliance}}\shortversion{\textsc{ExtLMDefAll}}\xspace}
\newcommand{\no}{\textsf{no}\xspace}
\newcommand{\yes}{\textsf{yes}\xspace}
\newcommand{\nd}{\textbf{nd}}
\newcommand{\td}{\textbf{td}}
\newcommand{\tw}{\textbf{tw}}
\newcommand{\pw}{\textbf{pw}}
\newcommand{\vc}{\textbf{vc}}
\newcommand{\tc}{\textbf{tc}}
\newenvironment{pf}{\begin{proof}}{\hfill\qed\end{proof}}
\newenvironment{pfclaim}{\begin{proof}}{\hfill$\Diamond$\end{proof}}
\newcommand{\crossout}[1]{%
  \begingroup
  \sbox\z@{#1}%
  \dimen\z@=\wd\z@
  \dimen\tw@=\ht\z@
  \dimen\z@=.99626\dimen\z@   
  \dimen\tw@=.99626\dimen\tw@ 
  \edef\co@wd{\strip@pt\dimen\z@}
  \edef\co@ht{\strip@pt\dimen\tw@}
  \leavevmode
  \rlap{\pdfliteral{q 1 J 0.4 w 0 0 m \co@wd\space \co@ht\space l S Q}}%
  \rlap{\pdfliteral{q 1 J 0.4 w 0 \co@ht\space m \co@wd\space 0 l S Q}}%
  #1%
  \endgroup
}
\begin{document}
\title{Enumerating Minimal Defensive Alliances}
\titlerunning{Enumerating Defensive Alliances}
\author{Zhidan Feng\inst1${}^,$\inst2\orcidID{0000-0002-3364-5396}  \and Henning Fernau\inst2\orcidID{0000-0002-4444-3220} \and Kevin Mann\inst2\orcidID{0000-0002-0880-2513}}

\authorrunning{Z. Feng, H. Fernau, K. Mann}
\institute{Shandong University, School of Mathematics and Statistic\\ 264209 Weihai, China.\\\and
Universit\"at Trier, Fachbereich~4 -- Abteilung Informatikwissenschaften\\  
54286 Trier, Germany.\\
\email{\{s4zhfeng,fernau,mann\}@uni-trier.de}
}

\maketitle

\begin{abstract}
In this paper, we study the task of enumerating (and counting) locally and globally minimal defensive alliances in graphs. We consider general graphs as well as special graph classes. From an input-sensitive perspective, our presented algorithms are mostly optimal.
\end{abstract}

\section{Introduction}

Alliance theory has been developed to model the formation of a group that can achieve a common goal if they cooperate. For instance, assuming that vertices represent social units (stretching from a single person to a whole nation, depending on the concrete scenario) of (for simplicity) equal power, while edges between vertices model vicinity (in some sense), a group of units may then be able to defend itself against `the rest' (and hence forms a defensive alliance) if each unit within the alliance has at least as much power when viewed together with its neighbors in the alliance as the power of the possibly united forces of the neighbors outside the alliance can bring into a fight.

This concept was originally introduced in \cite{FriLHHH2003,Kimetal2005,KriHedHed2004,SzaCza2001,Sha2004}. Up to now, hundreds of papers have been published on alliances in graphs\longversion{ and related notions}, as \longversion{also certified by}\shortversion{well as} surveys and book chapters\longversion{; see}  \cite{FerRod2014a,OuaSliTar2018,YerRod2017,HayHedHen2021}. 
\longversion{An overview on the wide variety of applications for alliances in graphs is given}\shortversion{Many applications for alliances in graphs are surveyed} in \cite{OuaSliTar2018}, including community-detection problems\longversion{ as described in} \cite{SebLagKhe2012}.
But when we want to find a defensive alliance in a graph, it is not very clear what further aims we like to achieve: For instance, do we like to find small or large alliances? Or what about bringing in costs, either associated to edges (modelling distances) or to vertices (possibly modelling the cost of maintaining an army there), or strengths (a vertex might possess more military abilities than another one), or other factors that are hard if not impossible to formalize?  In case of such unclear goals, it is at least a theoretical alternative to enumerate all alliances, potentially satisfying some further conditions like minimality (in some sense). Moreover, enumeration of solutions is quite a research area of its own, see~\cite{Was2016}, so that it is rather suprising that so far, to the best of our knowledge, no paper appeared on the enumeration of alliances. With this paper, we commence this natural line of research. We conclude this discussion by giving some references concerning enumeration of dominating sets (as this notion is quite related to alliances) and of hitting sets; this list of references is far from complete: \cite{AbuHeg2016,AbuFGLM2022a,BlaFLMS2022,BorGKM2000,BorEGK2002,GolHKS2020,KanLMN2014,KanUno2017,Mar2013a,Rei87\longversion{,Wot2001}}.
When we discuss enumeration algorithms for defensive alliances, we mostly focus on their input-sensitive analysis, i.e., we analyze their (mostly) exponential running time behavior and prove optimality by providing lower bound examples. When we try to obtain polynomial delay (one of the central concepts in the output-sensitive analysis of enumeration), we mostly focus on the technique of efficiently finding extensions of pre-solutions, very much in the sense of \cite{BorGKM2000,CasFGMS2022,Mar2013a}.

When we start thinking about enumerating or counting minimal alliances, we have to distinguish two notions of minimality introduced in the literature.
Shafique~\cite{Sha2004} called an alliance \emph{locally minimal} if the set obtained
by removing any vertex of the alliance is not an alliance. Bazgan \emph{et al.}~\cite{BazFerTuz2019}
considered another notion of alliance, called \emph{globally minimal}, 
with the property that no proper subset is an alliance. 
The non-monotonicity of alliances implies that there could be locally minimal alliances that are not globally minimal. This wording applies to any form of alliance, but we can clearly specialize it towards defensive alliances.
Of course, these two notions of minimality make no difference if we look for the smallest (non-empty) defensive alliance. This has been a rather classical line of research, see \cite{BliWol2018,FerRai07,GaiMaiTri2021,GaiMai2022,JamHedMcC2009,KiyOta2017}. Less (but recently increasing) work was done on finding largest minimal defensive alliances \cite{BazFerTuz2019,GaiMaiTri2021b,GaiMai2022a}.
Of course, algorithms for enumerating all minimal objects can also be used to find the smallest or largest among them. But enumeration is a task with quite unique features. In the discussion of this paragraph, let us only mention that a stronger form of monotonicity (namely, being hereditary) has been used for certain types of enumeration algorithms, see \cite{YuLLY2022} and the discussions therein.

\paragraph{Organization of the paper.}
The reader can find all necessary definitions, including some first observations, in \autoref{sec:defs}. In \autoref{sec:inputsense}, we discuss enumeration algorithms for minimal defensive alliances from an input-sensitive perspective. As we can show by providing lower-bound examples, even trivial enumeration algorithms are optimal in this sense.
Now, we switch our attention towards output-sensitive enumeration.
One strategy for obtaining enumeration algorithms with polynomial delay is to add into branching algorithms (typically used for enumeration algorithms that are good from an input-sensitive perspective) `fast' procedures to check if a continuation in a certain branch makes any sense. More formally speaking, this leads to extension problems as studied in \autoref{sec:extension-minDA}. Unfortunately, we will only give \NP-completeness results for many variations and graph classes. All this gives reasons for us to consider very simple graphs next, namely, trees. In \autoref{sec:tree-global-minDA}, we study the enumeration of globally minimal defensive alliances. Now, we can turn our (in terms of running times, measuring the input only) optimal enumeration algorithms into polynomial-delay algorithms, because the extension problem is polynomial-time solvable. This contrasts to what is known for locally minimal defensive alliances, where we can (only) provide some combinatorial (preliminary) results in \autoref{sec:tree-local-mindA} and have to leave (possibly surprisingly) the question whether or not all locally minimal defensive alliances of a given tree can be enumerated with an output-polynomial algorithm as an open question. We neither know the complexity status of the corresponding extension problem. In \autoref{sec:output-sensitive-minDA}, we return to the enumeration question concerning its globally minimal variation and provide another reason why we could not come up with an output-polynomial enumeration algorithm. Namely, we can show that, had we found such an algorithm, then we would have solved the \textsc{Hitting Set Transversal Problem}, a problem that is open now for more than four decades. Its solution would have considerable impact in different areas of Computer Science, such as Databases or Artificial Intelligence, as testified by its origins; see \cite{DomMisPit99,EitGot95,EitGotMak2003,GogPapSid98,KavPapSid93,ManRai87,Rei87}. Due to the particular non-monotonicity of the notion of defensive alliances, and also because there are potentially roughly $2^n$ many of them, it is interesting not only to list all minimal defensive alliances, but even the task of enumerating all defensive alliances becomes a task that is not completely trivial. We prove in \autoref{sec:enum-all-DA} that this task can be accomplished with polynomial delay, something which is unknown (and in a sense unlikely) for the minimal counterparts.
We list further points of future research in the concluding \autoref{sec:conclusions}.

\longversion{\section{Definitions}\label{sec:defs}}

In general, we use standard not(at)ion.
\shortversion{See its collection in the appendix.}
\longversion{Some of this is collected here for the convenience of the reader.}\begin{toappendix}
For a positive integer~$n$, let $[n]=\{ 1, \ldots, n\}$.

We assume some basic knowledge of graph theory and complexity theory on the side of the reader. Let us clarify some graph theoretic notions, though. To this end, let $G=(V,E)$ be an undirected graph. 
The \emph{depth} of a rooted tree is the maximum number of vertices on a path from root to leaf. A \emph{rooted forest} is a collection of rooted trees. The depth of a rooted forest is the maximum depth of the trees in the forest.  
$u,v\in V$ are called \emph{true twins} if $N[u]=N[v]$. Vertices of degree one are also called \emph{pendant}. 
$G$ is \emph{bipartite} if $V$ can be partitioned into $V_1$ and $V_2$ such that the induced graphs $G[V_1]$ and $G[V_2]$ contain no edges, i.e., if $V_1$ and $V_2$ are independent sets. Similarly, $G$ is a \emph{split graph} if $V$ can be partitioned into $V_1$ and $V_2$ such that $V_1$ is independent and $G[V_2]$ forms a complete graph, i.e., all possible edges are present. Graph~$G$ is a \emph{cograph} if every connected induced subgraph of~$G$ has diameter at most 2, or, equivalently, if $G$ has cliquewidth~2.

We now define several other important graph parameters. 
\begin{description}
    \item[$\nd(G)$] The \emph{neighborhood diversity} \cite{Lam2012} of a graph $G$, written $\nd(G)$, is the number of equivalency classes of the following equivalence relation: two vertices $u, v \in V$ are equivalent  if they have the same neighborhoods except for possibly
themselves, i.e., if $N(v) \setminus \{u\} = N(u) \setminus \{v\}$. The equivalence classes form either cliques or independent sets. More information on this parameter can be found in~\cite{Kou2013}. \longversion{Notice that $V$ can be partitioned into vertices of the same type in linear time using fast modular decomposition.}

    \item[$\tw(G)$] A \emph{tree decomposition} of~$G$ is given by a tree $T=(X,E_T)$ and a mapping $f:X\to 2^V$ such that (1) $\bigcup_{x\in X}f(x)=V$, (2) for each $\{u,v\}\in E$, there is some $x\in X$ with $f(x)\supseteq \{u,v\}$, and (3) for each $v\in V$, $X_v=\{x\in X\mid f(x)\ni v\}$ is connected in~$T$. $\tw(G,T,f)=\min\{\vert f(x)\vert - 1 \mid x\in X\}$ is the width of this decomposition, and the minimum width over all tree decompositions of~$G$ is the \emph{treewidth} of~$G$, written $\tw(G)$.
    \item[$\pw(G)$] If in the preceding definition, the tree happens to be a path, we speak of a path decomposition, and the minimum width over all path decompositions of~$G$ is the \emph{pathwidth} of~$G$, written $\pw(G)$.
    \item[$\td(G)$] An \emph{elimination forest} of $G$ is a rooted forest
$F$ with the same vertex set as $G$, such that for each edge $\{u,v\}$ of $G$, $u$ is an ancestor of $v$ or $v$ is an ancestor of $u$ in~$F$. (This forest can contain edges that are not in $G$.) The \emph{treedepth} of a graph $G$, denoted by $\td(G)$, is the minimum depth of an elimination forest of~$G$.
    \item[$\vc(G)$] A \emph{vertex cover} of a graph $G$ is a vertex set such that each edge is incident to at least one vertex in the set. $\vc(G)$ is the minimum size of a vertex cover of $G$.
    \item[$\tc(G)$] A set of vertices $X \subseteq V$ is a \emph{twin-cover} of $G$ (see~\cite{Gan2015}) if, for every edge $e = \{u,v\}\in E$, either (1) $u\in X$ or $v\in X$ [i.e., this part fulfills a vertex cover property], or $u,v$ are true twins, i.e., $N[u]=N[v]$. The size of the smallest twin-cover is also called, slightly abusing notation, the twin-cover (number) of~$G$, or $\tc(G)$ for short.
\end{description}
\end{toappendix}

Now, we formally define the central notions of our paper (as already described in the introduction). Let $G=(V,E)$ be an undirected graph. 
$A\subseteq V$ is a \emph{defensive alliance} if, for each $v\in A$, $\deg_A(v)+1\geq \deg_{\overline{A}}(v)$, where $\deg_X(v)$ is the number of neighbors $v$ has in the vertex set~$X$. A non-empty defensive alliance is called \emph{globally minimal} if no proper subset is a non-empty defensive alliance, while it is called  \emph{locally minimal} if no proper subset obtained by removing a single vertex is a non-empty defensive alliance. 
From the very definition, we can deduce the following fact that will be tacitly used in many places in this paper:
\begin{observation}
A globally minimal defensive alliance is connected.
\end{observation}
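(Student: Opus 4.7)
The plan is to prove the contrapositive: if a non-empty defensive alliance $A$ is disconnected, then some proper non-empty subset is also a defensive alliance, witnessing that $A$ is not globally minimal.

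Concretely, I would suppose $A \subseteq V$ is a defensive alliance with $G[A]$ disconnected, and pick any connected component $A_1$ of $G[A]$, so $\emptyset \neq A_1 \subsetneq A$. The key observation is that because $A_1$ is a union of connected components of $G[A]$, there are no edges between $A_1$ and $A \setminus A_1$ in $G$. Hence for every $v \in A_1$,
\[
\deg_{A_1}(v) = \deg_A(v) \quad \text{and} \quad \deg_{V \setminus A_1}(v) = \deg_{V \setminus A}(v),
\]
since any neighbor of $v$ in $A \setminus A_1$ would give a forbidden edge between components.

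Plugging these equalities into the defensive alliance inequality for $v \in A_1 \subseteq A$ gives $\deg_{A_1}(v) + 1 = \deg_A(v) + 1 \geq \deg_{V \setminus A}(v) = \deg_{V \setminus A_1}(v)$, so $A_1$ itself is a non-empty defensive alliance and a proper subset of $A$. This contradicts the assumption that $A$ is globally minimal. I do not expect any real obstacle here; the only point to state carefully is the choice of $A_1$ as a full connected component (not an arbitrary proper subset), which is exactly what ensures the edge-counts transfer cleanly between $A$ and $A_1$.
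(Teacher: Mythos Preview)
Your argument is correct and is exactly the natural unpacking of the observation; the paper itself does not spell out a proof but simply notes that the fact follows directly from the definition, which amounts to the same connected-component argument you give.
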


\noindent
For instance, this observation immediately implies: 
\begin{corollary}\label{cor:glob-def-all-paths}
There are only linearly many globally minimal defensive alliances in paths.
\end{corollary}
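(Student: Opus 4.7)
The plan is to exploit connectivity (by the preceding observation) together with the fact that the alliance condition is very restrictive on degree-two paths: a globally minimal defensive alliance in $P_n$ must be a subpath, and only a few shapes of subpath can be simultaneously an alliance and free of a smaller alliance subset.

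Write $P_n$ as $v_1, v_2, \ldots, v_n$ and let $A$ be a globally minimal defensive alliance. By the observation, $A$ induces a connected subgraph, hence $A=\{v_i, v_{i+1}, \ldots, v_j\}$ for some $1\leq i\leq j\leq n$. First I would verify the membership of the small candidates. A singleton $\{v_k\}$ is a defensive alliance iff $\deg_A(v_k)+1 = 1 \geq \deg_{\overline{A}}(v_k) = \deg_{P_n}(v_k)$, so only $\{v_1\}$ and $\{v_n\}$ are alliances among singletons. A pair $\{v_k, v_{k+1}\}$ of adjacent vertices always satisfies $\deg_A=1$ and $\deg_{\overline{A}}\leq 1$ at each of its two vertices, and is therefore always a defensive alliance. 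Finally, every subpath of length at least three is a defensive alliance, since interior vertices have $\deg_A=2$ and endpoints have $\deg_A=1\geq \deg_{\overline{A}}$.

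Next I would use these facts to rule out every candidate except the listed ones. If $A$ contains $v_1$ and $|A|\geq 2$, then $\{v_1\}\subsetneq A$ is itself a non-empty defensive alliance, contradicting global minimality; symmetrically for $v_n$. Hence either $A\in\{\{v_1\},\{v_n\}\}$, or $A\subseteq\{v_2,\ldots,v_{n-1}\}$. In the latter case, if $|A|\geq 3$, then the adjacent pair $\{v_i,v_{i+1}\}\subsetneq A$ is a non-empty defensive alliance, again contradicting global minimality. Combined with the singleton analysis above, an internal $A$ must have exactly two vertices. Consequently the complete list of globally minimal defensive alliances is $\{v_1\}$, $\{v_n\}$, and $\{v_i,v_{i+1}\}$ for $2\leq i\leq n-2$, yielding at most $n-1$ such sets and in particular only linearly many.

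There is no real obstacle here; the slightly delicate point is making sure that the two notions of endpoint (endpoint of $A$ versus endpoint of $P_n$) are tracked cleanly when checking the alliance inequality, but a brief case distinction handles it.
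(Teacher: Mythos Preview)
Your argument is correct and matches the paper's reasoning: the paper derives the corollary from connectedness and, in the appendix, sharpens it to exactly $n-1$ alliances by pointing to the analysis in the proof of \autoref{lem:count_local_on_path}, which is precisely the case distinction you carry out directly (singletons only at the two path ends, and otherwise adjacent pairs $\{v_i,v_{i+1}\}$ with $2\le i\le n-2$). Your write-up is in fact more self-contained than the paper's.
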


\begin{toappendix}
We can sharpen \autoref{cor:glob-def-all-paths} by saying that a $P_n$ has exactly $n-1$ many globally minimal defensive alliances. This can be seen by examining the proof of \autoref{lem:count_local_on_path}.
\end{toappendix}

\section{Input-sensitive Enumeration on General Graphs}
\label{sec:inputsense}

We first consider the enumeration problem from an input-sensitive perspective and show that even on very restricted graph classes, a seemingly trivial enumeration algorithm is optimal.

\begin{theorem}\label{thm:Enum_global_general}
There is an algorithm that enumerates all globally minimal defensive alliances in time $\mathcal{O}^*\left( 2^n\right)$. This algorithm is optimal, even for  complete bipartite graphs of pathwidth and treewidth~$2$, neighborhood diversity~$2$, vertex cover number~$2$, cliquewidth~$2$ (i.e., cographs), twin-cover~$2$, and tree depth~$3$.
\end{theorem}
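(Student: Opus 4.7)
For the upper bound I would use the following trivial enumeration augmented by a sum-over-subsets step. Iterate over all $2^n$ subsets $A\subseteq V$ and check the defensive alliance condition in polynomial time; let $f(A)=1$ if $A$ is a non-empty defensive alliance and $f(A)=0$ otherwise. A standard sum-over-subsets (zeta transform) dynamic program computes $g(M)=\sum_{S\subseteq M}f(S)$ for every $M\subseteq V$ in total time $\mathcal{O}(2^n\cdot n)$. Then $A$ is a globally minimal defensive alliance iff $f(A)=1$ and $g(A)=1$, i.e.\ $A$ itself is the only non-empty defensive alliance contained in $A$. This yields an $\mathcal{O}^*(2^n)$ enumeration.

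For the matching lower bound, I would exhibit a graph family lying in each restricted class and containing $\Omega(2^N/\sqrt{N})$ globally minimal defensive alliances; merely emitting the output then forces any algorithm into $\Omega(2^N/\sqrt{N})$ time, so $\mathcal{O}^*(2^N)$ is optimal. The natural candidate is $K_{2,n}$ with parts $L=\{u_1,u_2\}$ and $R=\{v_1,\dots,v_n\}$ (so $N=n+2$). Writing out the defensive inequality at a vertex $u_i\in A\cap L$ gives $|A\cap R|\geq\lceil(n-1)/2\rceil=:k$, while the inequality at $v_j\in A\cap R$ gives $|A\cap L|\geq 1$, and one checks that a non-empty $A$ is a defensive alliance iff both inequalities hold (the edge case $A\subseteq L$ or $A\subseteq R$ fails the conditions for $n\geq 2$). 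A short case analysis then shows $A$ is globally minimal iff both inequalities hold with equality: any excess vertex on either side can be removed without violating either condition. Hence the number of globally minimal defensive alliances equals $2\binom{n}{k}=\Omega(2^n/\sqrt{n})=\Omega(2^N/\sqrt{N})$.

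What remains is to verify that $K_{2,n}$ simultaneously realizes all parameter bounds claimed in the statement. The path decomposition with bags $\{u_1,u_2,v_j\}$ for $j=1,\dots,n$ witnesses $\pw(K_{2,n})=\tw(K_{2,n})=2$; the partition $\{u_1,u_2\}$ versus $\{v_1,\dots,v_n\}$ gives the two neighborhood-diversity types, so $\nd(K_{2,n})=2$; the set $\{u_1,u_2\}$ is a vertex cover of size $2$ and, since all $v_j$ are true twins, also a twin-cover of size $2$; the graph is $P_4$-free (any four vertices induce $C_4$, $K_{1,3}$, or an independent set) and hence a cograph of cliquewidth $2$; and the elimination tree with $u_1$ as root, $u_2$ as its only child, and $v_1,\dots,v_n$ as its leaves witnesses $\td(K_{2,n})\leq 3$, tight for $n\geq 2$ since $C_4$ appears as a subgraph. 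The main technical point is the structural characterization of globally minimal alliances in $K_{2,n}$; the parameter verifications are routine.
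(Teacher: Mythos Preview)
Your proof is correct and uses the same lower-bound family $K_{2,n}$ as the paper. You go slightly further by giving a full characterization of the globally minimal defensive alliances there (either choice of $u_i$ allowed, yielding $2\binom{n}{k}$), whereas the paper merely exhibits the subfamily $\{u_1\}\cup A'$ with $|A'|=\lceil (n-3)/2\rceil$; either count is $\Theta(2^{N}/\sqrt{N})$ and hence matches $\mathcal{O}^*(2^{N})$. The parameter verifications for $K_{2,n}$ are identical to the paper's.

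The one genuine methodological difference is in the upper-bound algorithm. The paper simply iterates over all $2^n$ subsets and tests each one for global minimality via a black-box polynomial-time checker (Proposition~5 of~\cite{BazFerTuz2019}). Your zeta-transform approach instead computes, for every $A$ simultaneously, the number $g(A)$ of non-empty defensive alliances contained in~$A$, and reads off global minimality from the condition $f(A)=1\wedge g(A)=1$. Your route is self-contained and sidesteps the external citation; the paper's is shorter to state but leans on that cited subroutine. Both run in $\mathcal{O}^*(2^n)$.
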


 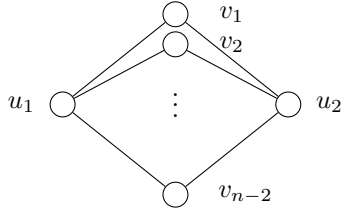
\begin{figure}[bt]
    \centering
    	
	\begin{tikzpicture}[transform shape]
		      \tikzset{every node/.style={ fill = white,circle,minimum size=0.3cm}}
			\node[draw,label={left:$u_1$}] (u1) at (-1.5,0) {};
			\node[draw,label={right:$u_2$}] (u2) at (1.5,0) {};
			\node[draw,label={right:$\ \,\,v_1$}] (v1) at (0,1.2) {};
			\node[draw,label={right:$\ \,\,v_2$}] (v2) at (0,.8) {};
			\node[draw,label={right:$\ \,\,v_{n-2}$}] (vn2) at (0,-1.2) {};
            \node at (0,0.1) {\vdots};					
            \path (u1) edge[-] (v1);
			\path (u1) edge[-] (v2);	
			\path (u1) edge[-] (vn2);					
            \path (u2) edge[-] (v1);
			\path (u2) edge[-] (v2);
			\path (u2) edge[-] (vn2);
        \end{tikzpicture}

    \caption{\label{fig:defall_diamond}A diamond $K_{2,n-2}$ constructed for \autoref{thm:Enum_global_general}.}
    \end{figure}

\begin{proof}
The naive enumeration algorithm is just going through all the cases by guessing for each vertex if it is in the solution or not, i.e., we are cycling through all vertex subsets. In each case, we check if the set is a globally minimal defensive alliance by using the polynomial-time algorithm from Proposition~5 of \cite{BazFerTuz2019}.  

Now, let us consider the complete bipartite graph $K_{2,n-2}\coloneqq (V,E)$ with $V \coloneqq \{u_1,u_2\} \cup \{v_1,\ldots,v_{n-2}\}$ and $ E \coloneqq \{\{u_1,v_i\},\{u_2,v_i\}\mid i\in [n-2]\}$ for some $n \in \mathbb{N}$ with $n>2$, which has a diamond shape as illustrated in \autoref{fig:defall_diamond}. A careful yet straightforward analysis shows the claimed structural properties\longversion{ of this example}.

\begin{toappendix} 
More precisely, we can say the following about $K_{2,n-2}=(V,E)$ with $V \coloneqq \{u_1,u_2\} \cup \{v_1,\ldots,v_{n-2}\}$ and $ E \coloneqq \{\{u_1,v_i\},\{u_2,v_i\}\mid i\in [n-2]\}$ for some $n \in \mathbb{N}$ with $n>2$. 
 Clearly, $\{u_1,u_2\}$ and $\{v_1,\ldots,v_{n-2}\}$ are the two neighborhood diversity classes. Hence, $\nd(K_{2,n-2})=2$. Also, $\{u_1,u_2\}$ forms a vertex cover, as well as a twin-cover, of $K_{2,n-2}$, so that $\vc(K_{2,n-2})=2$. Furthermore, $\{u_1,u_2,v_i\}$ for $i\in \{1,\ldots,n-2\}$ can be seen as the bags for a path decomposition of $K_{2,n-2}$, so that $\pw(K_{2,n-2})=2$. As forests are the graphs with treewidth (at most) one, also  $\tw(K_{2,n-2})=2$. The cograph property can also be easily checked. 
Finally, $\td(K_{2,n-2})=3$, because of the elimination tree $T=(V,E_T)$ with root $u_1$ and $E_T=\{\{u_2,w\} \mid w\in V\setminus \{u_2\}\}$. 
\end{toappendix}

Let $A' \subseteq \{v_1,\ldots,v_{n-2}\}$ with $\vert A' \vert = \lceil \frac{n-3}{2} \rceil$. Then $A \coloneqq \{ u_1 \} \cup A'$ is a globally minimal defensive alliance: 
For $v_i\in A$, $\deg_A(v_i) + 1 = 2 >\deg_{\overline{A}}(v_i)=1$. Further, $\deg_{A}(u_1) + 1 \geq \frac{n-3}{2} +1 = \frac{n-1}{2} = n-2-\frac{n-3}{2} \geq \deg_{\overline{A}}(u_1)$. Thus, $A$ is a defensive alliance. Assume there exists a nonempty defensive alliance $B \subsetneq A$. If $u_1\notin B$, then $\deg_B(v_i)+1=1<2= \deg_{\overline{B}}(v_i)$ for any $v_i\in B\subseteq A'$ would contradict the fact that $B$ is a defensive alliance. Hence, $B \cap \{v_1,\ldots,v_{n-2}\}\subsetneq A'$. So,  $\deg_{B}(u_1) + 1 < \frac{n-3}{2} +1 = \frac{n-1}{2} = n-2-\frac{n-3}{2} < \deg_{\overline{B}}(u_1)$. Therefore, $A$ is a globally minimal defensive alliance.

Since there are $\binom{n-2}{\lceil \frac{n-3}{2} \rceil}\geq \frac{1}{n-1}2^{(n-2) \cdot H\left(\frac{n-5}{2}\right)}\in o(2^n)$ many such sets (where $H$ is the binary entropy function), $\mathcal{O}(2^n)$ is a matching bound for the number of globally many defensive alliances on $K_{2,n-2}\coloneqq (V,E)$.\qed
\end{proof}

Since each globally minimal defensive alliance is also a locally minimal one, we directly get the next result. The algorithm mentioned is also simply going through all the cases. 

\begin{corollary}\label{cor:Enum_local_general}
There is an algorithm that enumerates all locally minimal defensive alliances in time $\mathcal{O}^*\left( 2^n\right)$. This algorithm is optimal, even for the restricted scenarios described in \autoref{thm:Enum_global_general}.
\end{corollary}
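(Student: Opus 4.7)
The plan is to prove both halves of the corollary by recycling the ingredients of \autoref{thm:Enum_global_general}. For the algorithmic part, I would use the same naive brute force: iterate over all $2^n$ subsets $A\subseteq V$ and, for each one, first test whether $A$ is a non-empty defensive alliance; if so, check for every $v\in A$ whether $A\setminus\{v\}$ is still a non-empty defensive alliance, and output $A$ when no such $v$ works. The per-subset verification is polynomial (essentially a local degree check at each vertex, combined with at most $|A|$ single-vertex deletion tests), so the whole procedure runs in $\mathcal{O}^*(2^n)$.

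For optimality, I would observe that global minimality trivially implies local minimality: if no proper non-empty subset of $A$ is a defensive alliance, then in particular no $A\setminus\{v\}$ with $v\in A$ is. Hence the $\binom{n-2}{\lceil (n-3)/2\rceil}\in \Omega(2^n/\mathrm{poly}(n))$ globally minimal defensive alliances exhibited on $K_{2,n-2}$ in the proof of \autoref{thm:Enum_global_general} are automatically locally minimal defensive alliances on the very same graph. Since $K_{2,n-2}$ was already shown there to simultaneously satisfy all of the structural parameter bounds on pathwidth, treewidth, neighborhood diversity, vertex cover number, cliquewidth, twin-cover, and tree depth, the same example witnesses the lower bound in every restricted scenario listed in the corollary.

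There is essentially no obstacle: the only point worth emphasising is the direction of the implication between the two notions of minimality, which is immediate from the definitions, so no new construction and no refined counting argument is needed.
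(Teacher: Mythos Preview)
Your proposal is correct and matches the paper's own argument essentially verbatim: the paper also uses the brute-force enumeration with a polynomial-time per-subset check, and for optimality simply notes that every globally minimal defensive alliance is locally minimal, so the $K_{2,n-2}$ lower bound from \autoref{thm:Enum_global_general} transfers immediately.
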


Later, we will add even one more scenario (treewidth~1) to the list when this trivial enumeration algorithm is optimal; see \autoref{lem:upperbound_local_caterpillar}.

\section{Extension Minimal Defensive Alliance}
\label{sec:extension-minDA}

Next, we move towards the output-sensitive perspective of enumeration.
We do this by looking at the extension problem that can be associated to our enumeration problem, defined formally below.
Hence, we cannot trivially re-write the branching algorithms from the previous section to achieve polynomial delay.
This type of results is also of interest independent of enumeration, as any branching algorithm that looks for the smallest non-empty alliance or for the largest minimal alliance could be potentially sped up by quickly deciding if certain branches are unnecessary to follow, as no minimal solutions are to be found in that subtree. More details and motivation for this line of research can be found in~\cite{CasFGMS2022}.

\problemdef{Extension Globally Minimal Defensive Alliance} {(\textsc{ExtGMDefAll})}{A graph $G=(V,E)$, and sets $U,N\subseteq V$.}{Is there a globally minimal defensive alliance $A\subseteq V\setminus N$ in $G$ with $U\subseteq A$?}

Quite similarly, we can also define \textsc{Extension Globally Minimal Defensive Alliance} (\textsc{ExtLMDefAll}).

To show \NP-hardness, we need \textsc{Cubic Planar Monotone 1-in-3-SAT} which  is \NP-complete according to \cite{MooRob2001}\longversion{, based on~\cite{Lar93}}. A Boolean formula is called \emph{monotone} either if each literal is positive or if each literal is negative. Without loss of generality, we only consider monotone Boolean formulae for which all literals are positive.  
For a monotone Boolean formula $\phi$ with the variable set $\mathcal{X}$ and the set of clauses $\mathcal{C}$, define its \emph{incidence graph} $G\left( \phi \right)=\left(V\left( \phi \right),E\left( \phi \right)\right)$ by setting $V\left( \phi \right)\coloneqq \mathcal{X} \cup \mathcal{C}$ and $E\left(\phi\right) \coloneqq \{ \{x,C\}\mid x\in C\}$.

\problemdef{Cubic Planar Monotone 1-in-3-SAT}{}{A monotone Boolean formula $\phi$ with a planar cubic incidence graph.}{Is $\phi$ satisfiable, such that each clause has exactly one true literal?}

\begin{figure}[bt]
    \centering
    	
	\begin{tikzpicture}[transform shape]
		      \tikzset{every node/.style={ fill = white,circle,minimum size=0.3cm}}
			\node[draw,label={left:$v_{i_1}$}] (x1) at (-1.25,1) {};
			\node[] (y11) at (-1.75,1.4) {};
			\node[] (y12) at (-1.5,1.6) {};
			\node[] (y13) at (-1.25,1.65) {};
			\node[] (y14) at (-1,1.6) {};
			\node[draw,label={left:$v_{i_2}$}] (x2) at (0,1.25) {};
			\node[fill = black,draw,label={above:$y_{i_2,1}$}] (y21) at (1.5,2.25) {};
			\node[fill = black,draw,label={above:$y_{i_2,2}$}] (y22) at (0.5,2.25) {};
			\node[fill = black,draw,label={above:$y_{i_2,3}$}] (y23) at (-0.5,2.25) {};
			\node[fill = black,draw,label={above:$y_{i_2,4}$}] (y24) at (-1.5,2.25) {};
			\node[draw,label={right:$v_{i_3}$}] (x3) at (1.25,1) {};
			\node[] (y31) at (1.75,1.4) {};
			\node[] (y32) at (1.5,1.6) {};
			\node[] (y33) at (1.25,1.65) {};
			\node[] (y34) at (1,1.6) {};
			\node[rectangle,fill = black,draw,label={below:$c_j$}] (cj) at (0,0) {};
			\node[rectangle,draw,label={below:$e_j$}] (ej1) at (-4,0) {};
			\node[fill = black,draw,label={above:$e_{j,1}$}] (ej11) at (-3,1) {};
			\node[fill = black,draw,label={above:$e_{j,2}$}] (ej12) at (-4,1.25) {};
			\node[fill = black,draw,label={above:$e_{j,3}$}] (ej13) at (-5,1) {};
			\node[rectangle,draw,label={below:$e_{j+1}$}] (ej2) at (4,0) {};
			\node[fill = black,draw,label={above:$e_{j+1,1}$}] (ej21) at (5,1) {};
			\node[fill = black,draw,label={above:$e_{j+1,2}$}] (ej22) at (4,1.25) {};
			\node[fill = black,draw,label={above:$e_{j+1,3}$}] (ej23) at (3,1) {};
			\node[fill = black,draw,label={right:$f_{j,1}$}] (fj1) at (2,-0.5) {};
			\node[fill = black,draw,label={right:$f_{j,2}$}] (fj2) at (2,-1.5) {};
			\node[rectangle,draw,label={below:$f_j$}] (fj) at (1,-1) {};
			\node[draw,isosceles triangle,isosceles triangle apex angle=60,minimum size =0.2cm,label={left:$z_{j,1}$}] (zj1) at (-2,-0.5) {};
			\node[draw,isosceles triangle,isosceles triangle apex angle=60,minimum size =0.2cm,label={left:$z_{j,2}$}] (zj2) at (-1.80,-1) {};
			\node[draw,isosceles triangle,isosceles triangle apex angle=60,minimum size =0.2cm,label={left:$z_{j,3}$}] (zj3) at (-1.5,-1.5) {};
			\node[] (leftend) at (-5,0) {$\ldots$};
			\node[] (rightend) at (5,0) {$\ldots$};
            \path (x1) edge[-] (cj);
            \path (x1) edge[-] (y11);
            \path (x1) edge[-] (y12);
            \path (x1) edge[-] (y13);
            \path (x1) edge[-] (y14);
            \path (x2) edge[-] (cj);
            \path (x2) edge[-] (y21);
            \path (x2) edge[-] (y22);
            \path (x2) edge[-] (y23);
            \path (x2) edge[-] (y24);
            \path (x3) edge[-] (cj);
            \path (x3) edge[-] (y31);
            \path (x3) edge[-] (y32);
            \path (x3) edge[-] (y33);
            \path (x3) edge[-] (y34);
            \path (ej1) edge[-] (cj);
            \path (ej2) edge[-] (cj);
            \path (zj1) edge[-] (cj);
            \path (zj2) edge[-] (cj);
            \path (zj3) edge[-] (cj);
            \path (fj) edge[-] (fj1);
            \path (fj) edge[-] (fj2);
            \path (fj) edge[-] (cj);
            \path (ej1) edge[-] (ej11);
            \path (ej1) edge[-] (ej12);
            \path (ej1) edge[-] (ej13);
            \path (ej2) edge[-] (ej21);
            \path (ej2) edge[-] (ej22);
            \path (ej2) edge[-] (ej23);
            \path (ej1) edge[-] (leftend);
            \path (ej2) edge[-] (rightend);
        \end{tikzpicture}

    \caption{\label{fig:construction_NP_ExtGMAD}Construction of $G$ of \autoref{thm:NP_ExtGMAD} for $j\in [m]$ ($j+1$ considered$\mod m$). The bipartiteness is explained by drawing black and white vertices. The vertex set~$U$ is illustrated by drawing vertices as squares.}
\end{figure}
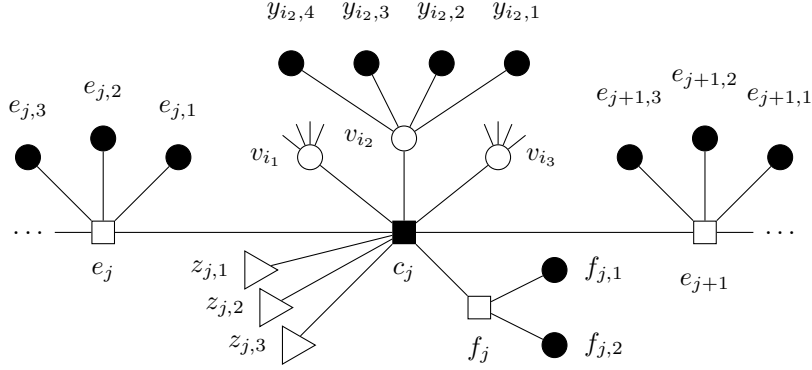
\begin{theorem}\label{thm:NP_ExtGMAD}
    \ExtGMDefAll is \NP-complete even on bipartite graphs of degree at most~9. 
\end{theorem}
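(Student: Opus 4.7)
The plan is to prove membership and hardness separately. Membership in \textsf{NP} is immediate: one guesses $A$ and checks in polynomial time, using Proposition~5 of \cite{BazFerTuz2019}, that $U \subseteq A \subseteq V \setminus N$ and that $A$ is a globally minimal defensive alliance. For hardness, I would reduce from \textsc{Cubic Planar Monotone 1-in-3-SAT}, building $G$ as pictured in \autoref{fig:construction_NP_ExtGMAD}: each variable vertex $v_i$ carries $4$ pendants $y_{i,k}$; each clause vertex $c_j$ is adjacent to its three variables, to $3$ pendants $z_{j,k}$, to a vertex $f_j$ bearing two further pendants, and to two ring vertices $e_j$ and $e_{j+1}$ (indices modulo~$m$), each $e_j$ carrying $3$ pendants. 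Set $U = \{c_j, e_j, f_j \mid j \in [m]\}$ (the squares in the figure) and $N = \emptyset$. Bipartiteness follows from the black/white colouring suggested in the figure, and the maximum degree is~$9$, attained at each $c_j$ (three variables, $e_j$, $e_{j+1}$, $f_j$, three $z_{j,k}$).

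The correctness argument I envision rests on three observations; write $C_j$ for the set of variables of clause~$c_j$. (i)~Every pendant $p$ of $G$ is itself a singleton defensive alliance, so any globally minimal $A$ with $U \subseteq A$ can contain no pendant; hence $A = U \cup V_T$ with $V_T = \{v_i \mid i \in T\}$ for some $T \subseteq [n]$. (ii)~A direct vertex-by-vertex verification shows that $U \cup V_T$ is a defensive alliance iff $|T \cap C_j| \geq 1$ for every~$j$, i.e.\ iff $T$ satisfies $\phi$; moreover, if $|T \cap C_j| = 1$ for every~$j$ then every vertex $v$ of the alliance has slack $\deg_A(v) + 1 - \deg_{\overline A}(v)$ equal to~$0$. (iii)~If $|T \cap C_j| \geq 2$ for some clause, then $A' := A \setminus \{f_j\}$ is still a defensive alliance, because the only vertex whose local counts change is $c_j$ and the inequality there survives exactly when $|T \cap C_j| \geq 2$. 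Global minimality therefore forces $|T \cap C_j| \leq 1$, which combined with~(ii) yields a $1$-in-$3$ assignment. For the converse, starting from a $1$-in-$3$ assignment $T$ and the zero-slack property, any proper subset $A' \subsetneq A$ would require some $v \in A'$ to lose an $A$-neighbour in going from $A$ to $A'$, violating the defensive inequality at~$v$; by the connectedness of $G[U \cup V_T]$ (through the ring $c_j - e_{j+1} - c_{j+1} - \cdots$ together with the pendant-like attachments of $f_j$ and the selected~$v_i$), the only non-empty such $A'$ is $A$ itself, so $A$ is globally minimal.

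The main obstacle I foresee is calibrating the pendants so that every vertex of the intended alliance has slack exactly~$0$ while $f_j$ becomes the unique removable vertex whenever $c_j$ is over-satisfied. The $z_{j,k}$ pendants are what forbid $c_j$ from being cheaply saturated by a pendant neighbour instead of by a true literal, and the ring of $e_j$'s is what makes any attempt to exclude a $U$-vertex cascade down to the empty set via successive failures of the defensive inequality. Once these counts are pinned down, bipartiteness, the degree bound of~$9$, and the polynomial-time construction follow by inspection; conceptually, the balance between over-satisfaction at a clause and the ``fragility'' of~$f_j$ is what drives the correctness of the $1$-in-$3$ encoding.
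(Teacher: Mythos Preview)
Your proposal is correct and follows essentially the same route as the paper: the identical reduction from \textsc{Cubic Planar Monotone 1-in-3-SAT} with the same gadget (four $y$-pendants per variable, three $z$-pendants plus an $f_j$ with two pendants per clause, a ring of $e_j$'s with three pendants each), the same choice $U=\{c_j,e_j,f_j\}$ and $N=\emptyset$, and the same three-step correctness argument (pendants excluded as singleton alliances, the defensive inequality at $c_j$ forces at least one literal per clause, and removing $f_j$ witnesses non-minimality when a clause is over-satisfied). Your zero-slack-plus-connectedness justification of global minimality in the forward direction is exactly what the paper sketches in one line.
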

\begin{pf}
    For \NP-membership, we\longversion{ can} employ guess-and-check, by\longversion{ using the algorithm of} Proposition~5 of~\cite{BazFerTuz2019}. 

    For the hardness result, we use \textsc{Cubic Planar Monotone 1-in-3-SAT}. Let $\phi$ be an instance with the variables $\mathcal{X}=\{x_1,\ldots,x_n\}$ and the clauses $ \mathcal{C} \coloneqq \{ C_1,\ldots, C_m\}$. Define $G\coloneqq (V,E)$ with 
    \begin{equation*}
        \begin{split}
            V \coloneqq& \{v_1,\ldots,v_n\} \cup \{ c_j \mid j \in [m]\} \cup \{e_j,e_{j,\ell}\mid j\in [m], \ell \in [3]\} \cup{} \\
            &\{ f_j, f_{j,t} \mid j \in [m],\, t \in [2] \} \cup \{ y_{i,k} \mid i \in [n],\, k \in [4] \} \cup  \{ z_{j,s} \mid j \in [m],\, s \in [3] \} \\
        \end{split}    
    \end{equation*}    \begin{equation*}
        \begin{split}
            E \coloneqq& \{\{ c_j, v_i\}\mid j \in [m],\, i\in [n],\, x_i \in C_j \} \cup \{ \{ y_{i,k}, v_i\}\mid  i \in [n],\, k \in [4] \} \cup{}\\
            &\{ \{ c_j, z_{j,s}\}\mid  j \in [m],\, s \in [3] \} \cup \{ \{ c_j, f_j\},\{ f_{j,t}, f_j\}\mid  j \in [m], t\in [2]\} \cup{}\\
            &\{ \{c_j,e_{j'}\},\{e_{j,\ell},e_{j} \} \mid j,j'\in [m],\, j' - j \equiv t \mod m, t \in \{0, 1\}, \ell \in [3] \}\,. 
        \end{split}    
    \end{equation*}
    Since each variable appears in three clauses, $\deg(v_i)=7$ for each $i \in [n]$. As $\phi$ is a \textsc{3-SAT} instance, $\deg(c_j)=9$ for each $j\in[m]$. The remaining vertices have degree at most~5. $G$ is bipartite, as indicated in \autoref{fig:construction_NP_ExtGMAD}, with the classes $\{ c_j, e_{j,\ell}, f_{j,t} \mid j \in [m],t\in [2], \ell \in [4]\} \cup \{ y_{i,k} \mid i \in [n],\, k \in [4] \}$ and $\{v_i\mid i\in [n]\} \cup \{ z_{j,s}, f_j, e_j \mid j \in [m], s \in [3]\}$.
    Define $U\coloneqq \{ c_j, e_j, f_j \mid j \in [m]\}$, $N\coloneqq \{ z_{j,s} \mid j \in [m], s\in [3]\}$\longversion{, illustrated by drawing vertices as squares and triangles, respectively, in \autoref{fig:construction_NP_ExtGMAD}}. 
     
    \begin{claim}
        $\phi$ is a \yes-instance of \textsc{Cubic Planar Monotone 1-in-3-SAT} \iffl there is a globally minimal defensive alliance $A\subseteq V \setminus N$ on $G$ with $U \subseteq A$.
    \end{claim}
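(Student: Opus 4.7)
My plan is to prove both directions of the claimed equivalence by first pinning down which vertex sets can extend $U$ to a globally minimal defensive alliance.

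For the \emph{forward} direction, I would take a 1-in-3 satisfying assignment $\sigma$ of $\phi$ and set $A := U \cup \{v_i : \sigma(x_i) = \mathrm{true}\}$. Counting degrees then checks the alliance condition everywhere: $\deg_A(c_j) = 3 + 1 = 4$ (the three $U$-neighbours $f_j, e_j, e_{j+1}$ plus the unique satisfying variable), $\deg_A(v_i) = 3$ at each chosen $v_i$, $\deg_A(e_j) = 2$, and $\deg_A(f_j) = 1$; in each case the threshold $\deg_A + 1 \geq \deg_{\overline{A}}$ is met with equality. The delicate step is global minimality, which I would establish by showing that every nonempty alliance $B \subseteq A$ must equal $A$. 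If $c_j \in B$, then $\deg_B(c_j) \geq 4$ forces all four $A$-neighbours of $c_j$ into $B$ (namely $f_j, e_j, e_{j+1}$ and the satisfying variable $v_{i^\ast}$); in turn $e_j \in B$ needs $\deg_B(e_j) \geq 2$, which forces $c_{j-1} \in B$, and the cycle formed by $c_1, e_2, c_2, e_3, \dots, c_m, e_1$ propagates this to every clause index, so $B = A$. Alternatively, if $B$ avoids every $c_j$, then each $v_i$, $e_j$, or $f_j$ in $B$ has all its $A$-neighbours outside $B$, violating its own threshold, so $B = \emptyset$.

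For the \emph{backward} direction, I would start from a globally minimal alliance $A \supseteq U$ and exploit a clean singleton observation: every degree-one vertex $w \in \{y_{i,k}, z_{j,s}, e_{j,\ell}, f_{j,t}\}$ satisfies $\deg_{\{w\}}(w) + 1 = 1 \geq \deg(w) = 1$, so $\{w\}$ is itself a defensive alliance. Since $|A| \geq |U| = 3m \geq 3$, any such $w$ inside $A$ would yield a proper nonempty alliance $\{w\} \subsetneq A$, contradicting global minimality. Hence $A = U \cup S$ for some $S \subseteq \{v_1, \dots, v_n\}$, and the alliance condition at $c_j$ reduces to $3 + |S \cap C_j| \geq 4$, so $S$ hits every clause. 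To upgrade hitting to \emph{exactly-one}, I would test the subset $A' := A \setminus \{f_j\}$: removing $f_j$ only lowers $\deg_A(c_j)$ by one and leaves every other degree count unchanged, so $A'$ remains a defensive alliance whenever $\deg_A(c_j) \geq 5$; global minimality therefore forces $\deg_A(c_j) = 4$, i.e.\ $|S \cap C_j| = 1$ for every $j$. Setting $\sigma(x_i)$ true exactly when $v_i \in S$ thus yields a 1-in-3 satisfying assignment for $\phi$.

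I expect the main obstacle to lie in the cascading minimality argument of the forward direction: the cyclic indexing of the $e_j$'s is essential for closing the propagation around all of $U$, and one must additionally rule out alliances living entirely off the $U$-backbone, which rests on the variables having all their alliance-critical neighbours among the $c_j$'s.
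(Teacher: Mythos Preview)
Your proposal is correct and follows essentially the same approach as the paper. In the forward direction you spell out the propagation around the $c_j$--$e_j$ cycle explicitly, whereas the paper compresses this into the observation that tightness at every vertex plus connectedness of $A$ forces any proper nonempty subset to fail; in the backward direction your argument (pendants excluded as singleton alliances, then $|S\cap C_j|\geq 1$ from the threshold at $c_j$, then $|S\cap C_j|\leq 1$ via removing $f_j$) matches the paper's proof step for step.
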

    \begin{pfclaim}
    Let $a:\mathcal{X}\to \{0,1\}$ be a solution for the \textsc{Cubic Planar Monotone 1-in-3-SAT} instance $\phi$. Define $A\coloneqq U \cup \{v_i \mid x_i\in a^{-1}(1)\}$.
    Let $j\in [m]$. Then $\deg_A(e_j)+1=3=\deg_{\overline{A}}(e_j)$ and $\deg_A(f_j)+1=2=\deg_{\overline{A}}(f_j)$. Further, since each clause is satisfied by exactly one variable, $\deg_A(c_j) + 1 = 5 = \deg_{\overline{A}}(c_j)$. As further $\deg_A(v_i)+1=4=\deg_{\overline{A}}(v_i)$ for each $i\in [n]$, $A$ is a defensive alliance. Since each vertex fulfills the defensive alliance property with equality, if we delete a vertex from $A$ we have to delete also its neighbors from $A$ for $A$ to stay defensive alliance. Further, $A$ is connected ($\{ c_j, e_j \mid j \in [m]\} \subseteq A \subseteq N(\{ c_j, e_j \mid j \in [m]\} )$ and $\{ c_j, e_j \mid j \in [m]\}$ is a cycle). Thus, $A$ is globally minimal. 

    Let $A\subseteq V \setminus N$ be a globally minimal defensive alliance of $G$ with $U \subseteq A$. $A$ does not include any vertex from $\{e_{j,\ell}, f_{j,t}, z_{j,\ell} \mid j \in [m],\, \ell \in [3], t \in [2]\} \cup \{ y_{i,k}\mid i \in [n], k\in [4]\}$, as these are pendant and each would form a defensive alliance by itself. Hence, for each $j\in [m]$, there is an $i\in [n]$ such that $v_i\in A \cap N(c_j)$. Otherwise, $\deg_A(c_j) +1 = 4 < 6 = \deg_{\overline{A}}(c_j)$. Therefore, the assignment $$a: \mathcal{A}\to\{0,1\}, x_i\mapsto \begin{cases}
        1, & v_i \in A\\
        0, & v_i \notin A\\
    \end{cases}$$
    satisfies $\phi$. Assume there is a clause $C_j\in \mathcal{C}$ with two true literals. Define $A_j\coloneqq A\setminus \{f_j\}$. We want to show that $A_j$ is also a defensive alliance; this would contradict that $A$ is globally minimal and would prove that $\phi$ is a \yes-instance. As $\{ c_j \} = A_j\cap N(f_j)$, for $u\in A_j \setminus \{ c_j \}$, $\deg_{A_j}(u)=\deg_{A}(u) \geq \deg_{\overline{A}}(u)=\deg_{\overline{A_j}}(u)$. As $\deg_{A_j}(c_j) \geq 5\geq \deg_{\overline{A_j}}(c_j)$, $A_j$ is also a defensive alliance. 
    \end{pfclaim}
\noindent    Since there are $5 \cdot \vert \mathcal{X}\vert + 11 \cdot  \vert \mathcal{C}\vert$ many vertices, this reduction is polynomial. 
\end{pf}

We can tighten this theorem even more\longversion{ as follows} towards planar graphs\shortversion{ as explained in the appendix}.
Namely, the problem why the previous construction from \autoref{thm:NP_ExtGMAD} is not giving a planar graph is the cycle $\dots-e_j-c_j-\dots$ that is needed to ensure connectedness of the vertex set~$U$. But this problem can be overcome.

\begin{corollary}\label{cor:NP_ExtGMAD}
    \ExtGMDefAll is \NP-complete even on planar bipartite graphs of degree at most~9. 
\end{corollary}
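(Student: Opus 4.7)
The plan is to modify the reduction behind \autoref{thm:NP_ExtGMAD} so that the constructed graph becomes planar, while remaining bipartite and of maximum degree~$9$. Since \textsc{Cubic Planar Monotone 1-in-3-SAT} supplies a planar incidence graph $G(\phi)$, and every local pendant gadget in the original construction (the $y_{i,k}$'s, $z_{j,s}$'s, $f_{j,t}$'s and $e_{j,\ell}$'s, as well as the $f_j$'s) can be placed inside any face incident to its anchor without introducing a crossing, the sole non-planar feature of the previous construction is the global cycle $\dots-e_j-c_j-e_{j+1}-\dots$ through the clause vertices. The idea is to replace this cycle by a planar ``connector'' gadget that still realises the two key properties used in the proof: $U=\{c_j,e_j,f_j\mid j\in [m]\}$ is contained in a connected candidate alliance, and every $c_j$ satisfies $\deg_A(c_j)+1=\deg_{\overline{A}}(c_j)$ with equality.

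I would carry this out in three steps. First, fix a planar embedding of $G(\phi)$ and place every local pendant gadget inside an incident face; by construction this partial drawing is planar. Second, use the embedding to read off an ordered cyclic walk $c_{\pi(1)},\ldots,c_{\pi(m)},c_{\pi(1)}$ through the clauses that visits each $c_j$ exactly twice and in which any two consecutive entries share a face of the embedding; such a walk can be obtained from the boundary tour of a spanning tree of the cubic bipartite $G(\phi)$, suitably restricted to its clause side. For each consecutive pair in the walk, insert a fresh $e$-vertex (equipped with the three degree-$1$ pendants as in the original construction) whose only other neighbours are precisely these two clauses; since consecutive clauses share a face, no crossing is introduced. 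Third, verify the reduction: every $c_j$ still has exactly two $e$-neighbours and degree~$9$, so the candidate alliance $A=U\cup\{v_i:a(x_i)=1\}$ remains a connected defensive alliance with all inequalities tight, and is globally minimal exactly when $\phi$ has a 1-in-3 satisfying assignment; the argument removing $f_j$ in the presence of a clause with two true literals is unchanged.

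The main obstacle is ensuring that the planar walk visits every clause \emph{exactly} twice, so that no $c_j$ exceeds degree~$9$ and the degree-balance arguments remain intact. If the boundary tour of the chosen spanning tree fails to do this directly, a short planar auxiliary chain of $e$-like vertices (with their own pendant triples) can be inserted inside a common face to split a surplus visit into two admissible ones, staying within the bipartite, planar, bounded-degree budget. With this precaution the correctness analysis from \autoref{thm:NP_ExtGMAD} transfers almost verbatim, yielding \NP-completeness of \ExtGMDefAll on planar bipartite graphs of degree at most~$9$.
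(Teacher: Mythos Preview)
Your approach is different from the paper's, and it has a real gap.

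The paper does \emph{not} try to route the $e$-cycle through the faces of the given planar embedding of $G(\phi)$. Instead, it first lays the formula out on a grid so that all original clause vertices sit on one vertical line (making the cycle through them trivially drawable), and then removes the resulting edge crossings one by one with Laroche's cross-over gadget~\cite{Lar93}. The point of invoking Laroche is that each such gadget introduces a small bipartite planar patch of fresh clause and variable vertices, and the paper observes that this patch admits a planar curve passing through \emph{all} of its clause vertices; these local curves are then spliced into the global clause cycle. Variable vertices of degree~$4$ created by the gadget are handled by attaching five leaves instead of four, keeping the maximum degree at~$9$.

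Your proposal, by contrast, needs a planar closed walk through the clause vertices of $G(\phi)$ in which each $c_j$ acquires exactly two $e$-neighbours. As written this is inconsistent: a cyclic sequence $c_{\pi(1)},\ldots,c_{\pi(m)},c_{\pi(1)}$ of length~$m$ cannot visit each of the $m$ clauses twice, and if you really meant a length-$2m$ walk visiting each clause twice, every $c_j$ would pick up four $e$-neighbours and its degree would jump to~$11$. If instead you meant a single visit per clause (which matches your later claim of ``exactly two $e$-neighbours''), you are asserting a Hamiltonian cycle in the face-adjacency hypergraph of the clauses, and neither the boundary tour of a spanning tree of $G(\phi)$ nor any other argument you give yields this; boundary tours visit vertices according to their tree degree, not uniformly. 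Your ``auxiliary chain'' patch is too vague to close the gap: an extra $e$-like chain inside a face does not reduce the number of $e$-edges already incident to an over-visited clause, and if the chain routes through new clause-like vertices, those in turn need their own tightness gadgets, none of which you specify. In short, the crucial step---existence of a planar connector touching every clause with multiplicity exactly two---is asserted but not established, whereas the paper sidesteps the issue entirely by building the connector into Laroche's planarisation.
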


\begin{toappendix}
\begin{figure}[htp]
    \centering
    \begin{tikzpicture}
        \tikzset{node/.style={fill = white,circle,minimum size=0.3cm}}

        \node[draw,circle] (x1) at (0,0) {};
        \node[draw,circle] (x2) at (-1,1) {};
        \node[draw,circle] (x3) at (0,1) {};
        \node[draw,circle] (x4) at (1,1) {};
         \node[draw,circle] (x5) at (-2,2) {}; \node[draw,circle] (x6) at (-1,2) {}; \node[draw,circle] (x7) at (0,2) {}; \node[draw,circle] (x8) at (1,2) {}; \node[draw,circle] (x9) at (2,2) {};
         \node[draw,circle] (x10) at (-1,3) {};
         \node[draw,circle] (x11) at (0,3) {};
         \node[draw,circle] (x12) at (0,4) {};
        
        \node[draw,rectangle] (c1) at (-0.5,0.5) {};
        \node[draw,rectangle] (c2) at (0.5,0.5) {};
        \node[draw,rectangle] (c3) at (-1.5,1.5) {};
        \node[draw,rectangle] (c4) at (-0.5,1.5) {};
        \node[draw,rectangle] (c5) at (0.5,1.5) {};
        \node[draw,rectangle] (c6) at (1.5,1.5) {};
        \node[draw,rectangle] (c7) at (-1.5,2.5) {};
        \node[draw,rectangle] (c8) at (-0.5,2.5) {};
        \node[draw,rectangle] (c9) at (-0.5,3.5) {};
        \node[fill = blue,draw,rectangle] (c10) at (3.5,2) {};
        \node[fill = blue,draw,rectangle] (c11) at (3.5,4.5) {};

\draw[red] (x1)--(x3)--(x7)--(x11)--(x12);
\draw[red] (x5)--(x6)--(x7)--(x8)--(x9);

        \path (x1) edge[-] (c1);
        \path (x1) edge[-] (c2);
        \path (x2) edge[-] (c1);
        \path (x3) edge[-] (c1);
        \path (x3) edge[-] (c2);
        \path (x4) edge[-] (c2);
        \path (x2) edge[-] (c3);
        \path (x3) edge[-] (c4);
        \path (x3) edge[-] (c5);
        \path (x4) edge[-] (c6);
        \path (x5) edge[-] (c3);
        \path (x5) edge[-] (c3);
        \path (x6) edge[-] (c3);
        \path (x6) edge[-] (c4);
        \path (x7) edge[-] (c4);
        \path (x7) edge[-] (c5);
        \path (x8) edge[-] (c5);
        \path (x8) edge[-] (c6);
        \path (x9) edge[-] (c6);
        \path (x5) edge[-] (c7);
        \path (x6) edge[-] (c7);
        \path (x6) edge[-] (c8);
        \path (x7) edge[-] (c8);
        \path (x10) edge[-] (c7);
        \path (x11) edge[-] (c8);
        \path (x10) edge[-] (c9);
        \path (x11) edge[-] (c9);
        \path (x12) edge[-] (c9);
        \path (x9) edge[-] (c10);
        \path (x12) edge[-] (c11);

        \draw[blue] (c1)--(c2);
        \draw[blue] (c1)--(-1.5,0.5)--(c3);
        \draw[blue] (c2)--(1.5,0.5)--(c6); 
        \draw[blue] (c5)--(c6);
        \draw[blue] (c4)--(c5);
        \draw[blue] (c3)--(c7);
        \draw[blue] (c9)--(3.5,3.5);
        \draw[blue] (c8)--(3.5,2.5);
        \draw[blue] (c4)--(c8);
        \draw[dotted,blue] (c7)--(-1.5,3.5)--(c9);

    \end{tikzpicture}
    \caption{Drawing a (blue) detour connection through all new clause vertices. Squares refer to clauses and circles refer to variables. The red lines indicate the original crossing. The blue squares denote the original clauses.}
    \label{fig:Laroche}
\end{figure}

\begin{pf}
We want to apply the same construction idea as for \autoref{thm:NP_ExtGMAD}, but we do not start with an instance of  \textsc{Cubic Planar Monotone 1-in-3-SAT}, but we first convert it into a planar layout by the construction described by Laroche~\cite{Lar93}. This means that we first lay out the formula on a grid, possibly destroying planarity. The advantage is that now, all clause vertices can be easily connected by a cycle (including intermediate vertices~$e_j$, as well as the decorations by the $z_*,f_*,e_*$-vertices that are attached like small trees), because the (original) clause variables all lie on a vertical line. Then, Laroche eliminates crossings of edges connecting variables and clauses one by one. However, this cross-over gadget, shown in \autoref{fig:Laroche}, admits a closed line drawn through all new clause vertices (as displayed in the figure). These lines (again including intermediate vertices etc.) will finally connect all clause vertices as required. More precisely, the figure shows the blue line of a detour concerning the rightmost crossing of a particular clause; the broken line indicates where the next detour might start to fix the second-to-rightmost crossing, etc. Moreover, notice that the newly introduced clauses all have three variables. Moreover, all new variable vertices have degree 2, 3, or 4. But even with variable vertices of degree~4, we can easily deal with by (now) adding five leaves $y_{i,1},\dots,y_{i,5}$ in our reduction, so that also in this slightly more involved construction, each vertex finally has degree at most nine.  \end{pf}
\end{toappendix}

Note that based on the construction of \autoref{fig:construction_NP_ExtGMAD}, any extension~$A$ of~$U$ with $A\cap N=\emptyset$ towards a locally minimal defensive alliance does not include any vertex from  $\{e_{j,\ell}, f_{j,t}, z_{j,\ell} \mid j \in [m],\, \ell \in [3], t \in [2]\} \cup \{ y_{i,k}\mid i \in [n], k\in [4]\}$, either. Since $U\subseteq A$,   $A\cap \{e_{j,\ell}, f_{j,t}\mid j \in [m],\, \ell \in [3], t \in [2]\}=\emptyset$; for each vertex in $ \{ y_{i,k}\mid i \in [n], k\in [4]\}$, if it is as an isolated vertex in $A$, then deleting it can get a smaller defensive alliance, if it is not as an isolated vertex in $A$, that is, its adjacent vertex $v_i$ is in $A$, as $\{c_j\mid\in [m]\}\subseteq U\subseteq A$, making $y_{i,k}$ redundant. Therefore, we have the following corollary:

\begin{corollary}\label{cor:NP_ExtLMAD}
    \ExtLMDefAll is \NP-complete even on planar bipartite graphs of degree at most~9. 
\end{corollary}


\section{Efficiently Enumerating Globally Minimal Defensive Alliances on Trees}\label{sec:tree-global-minDA}

In this section, we want to prove that there is a polynomial-delay enumeration algorithm for globally minimal defensive alliances which runs in time $\mathcal{O}^*\left(\sqrt{2}^n\right)$. We begin with showing a matching lower bound. Subsequently, we discuss some combinatorial lemmas for globally minimal defensive alliances on trees. Then, we prove that the extension problem associated to globally minimal defensive alliances can be efficiently solved. This is used for our enumeration algorithm that we present last, \longversion{which is }analyzed both from an output- and from an input-sensitive perspective.

\longversion{\subsection{A Lower-Bound Example Tree Family}}\shortversion{\paragraph{A Lower-Bound Example Tree Family}}

\begin{toappendix}
  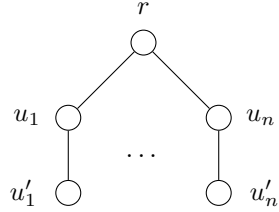
\begin{figure}[h]
    \centering
    	
	\begin{tikzpicture}[transform shape]
		      \tikzset{every node/.style={ fill = white,circle,minimum size=0.3cm}}
			\node[draw,label={above:$r$}] (r) at (0,0) {};
			\node[draw,label={left:$u_1$}] (u1) at (-1,-1) {};
			\node[draw,label={right:$u_n$}] (un) at (1,-1) {};
			\node[draw,label={left:$u'_1$}] (u'1) at (-1,-2) {};
			\node[draw,label={right:$u'_n$}] (u'n) at (1,-2) {};
            \node at (0,-1.5) {\ldots};			
            \path (r) edge[-] (u1);
			\path (r) edge[-] (un);		
            \path (u1) edge[-] (u'1);
			\path (un) edge[-] (u'n);
        \end{tikzpicture}

    \caption{\label{fig:defall_lowerbound_diameter-trees}Trees with low diameter: a construction for \autoref{lem:defall_lowerbound_diameter-trees}.}
    \end{figure}
\end{toappendix}

\begin{lemma}\label{lem:defall_lowerbound_diameter-trees}
There is a class of trees $T_n=(V_n,E_n)$ of order $n$ of diameter at most~4 with $\mathcal{O}^*\left(\sqrt{2}^n\right)$ many globally minimal defensive alliances. 
\end{lemma}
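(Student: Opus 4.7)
The plan is to use the tree $T_n$ depicted in Figure~\ref{fig:defall_lowerbound_diameter-trees}: for $n=2k+1$, it consists of a root~$r$ adjacent to $k$ internal vertices $u_1,\ldots,u_k$, each carrying a single leaf $u_i'$. The diameter is exactly~$4$, realized by any path $u_i'-u_i-r-u_j-u_j'$ with $i\neq j$, so the structural claim is immediate; the work lies in counting globally minimal defensive alliances.

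I would classify these alliances in two stages. First, I claim that no globally minimal defensive alliance~$A$ contains both $u_i$ and $u_i'$ for some~$i$. Indeed, if additionally $r\in A$, then $A\setminus\{u_i'\}$ is still defensive: $u_i$ stays safe with $\deg_{A\setminus\{u_i'\}}(u_i)+1=2\geq 1=\deg_{\overline{A\setminus\{u_i'\}}}(u_i)$, and no other vertex of~$A$ is affected, contradicting global minimality. If instead $r\notin A$, connectivity (via the observation preceding Corollary~\ref{cor:glob-def-all-paths}) forces $A=\{u_i,u_i'\}$, which properly contains the defensive alliance $\{u_i'\}$.

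Second, I would split on whether $r\in A$. If $r\notin A$, connectivity confines~$A$ to a single branch $\{u_i,u_i'\}$, and a direct check leaves exactly the $k$ singletons $\{u_i'\}$ as globally minimal alliances. If $r\in A$, the first stage forces $A=\{r\}\cup S$ with $S\subseteq\{u_1,\ldots,u_k\}$. The defensive condition at~$r$ reads $|S|+1\geq k-|S|$, i.e., $|S|\geq\lceil(k-1)/2\rceil$, while every $u_i\in S$ is automatically safe because its two neighbors split as one in~$A$ and one outside. Global minimality then forces equality: otherwise some $\{r\}\cup S'$ with $S'\subsetneq S$ meeting the threshold would already be a defensive alliance, and any nonempty $B\subseteq A$ with $r\notin B$ fails immediately because every $u_i\in B$ would have no neighbor in~$B$ (recall pendants are excluded by stage~one).

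Altogether this yields exactly $k+\binom{k}{\lceil(k-1)/2\rceil}$ globally minimal defensive alliances on~$T_n$. Since $\binom{k}{\lfloor k/2\rfloor}=\Theta(2^k/\sqrt{k})$ by Stirling and $2^k=(\sqrt{2})^{2k}=(\sqrt{2})^{n-1}$, the count is $\Omega(\sqrt{2}^{\,n}/\sqrt{n})$, matching the desired $\mathcal{O}^*(\sqrt{2}^{\,n})$ bound. The main obstacle is the minimality analysis when $r\in A$, i.e., showing that no nonempty proper subset of $\{r\}\cup S$ is itself a defensive alliance; once this is settled, the counting reduces to a routine Stirling estimate.
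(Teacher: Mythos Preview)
Your proof is correct and follows essentially the same approach as the paper: construct the spider with root~$r$, $k$ legs of length two, count the $k$ singleton leaves plus the $\binom{k}{\lceil(k-1)/2\rceil}$ sets $\{r\}\cup S$ with $|S|=\lceil(k-1)/2\rceil$, and invoke Stirling. Your two-stage organization (first ruling out any $\{u_i,u_i'\}$ pair, then splitting on $r\in A$) is in fact a bit more careful than the paper's write-up, which jumps directly to ``we only need to consider $A\subseteq\{r,u_1,\ldots,u_k\}$''; one tiny quibble is that the parenthetical ``pendants are excluded by stage one'' really uses stage one \emph{plus} connectivity, but the argument stands.
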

    
\begin{pf}
Let $n\in \mathbb{N}$. Define $T_n=(V_n,E_n)$ with $V_n=\{r\}\cup \{u_i,u'_i \mid i\in [n]\}$ and $E_n=\{\{r,u_i\},\{u_i,u'_i\}\mid i\in [n]\}\}$\longversion{; see \autoref{fig:defall_lowerbound_diameter-trees}}. Here $\vert V_n \vert = 2n+1$. Clearly, for any of the $n$ pendant vertices $u'_1,\ldots,u'_n$, this vertex is a defensive alliance by itself. Also, the distance between $u'_1$ and $u'_2$ is four, and any other pair of vertices has no greater distance from each other. So we only need to consider defensive alliances $A \subseteq \{r,u_1,\ldots,u_n\}$. If there is any vertex $u_i\in A$, then $r\in A$. Otherwise $\deg_A(u_i)+1=1<2= \deg_{\overline{A}}(u_i)$. So assume $r \in A$. Then at least $\lceil \frac{n-1}{2} \rceil$ many of its neighbors are in $A$. Otherwise, $\deg_{A}(r) + 1 < \frac{n-1}{2} +1 = \frac{n+1}{2} = n-\frac{n-1}{2} < \deg_{\overline{A}}(r)$. Conversely, let $A'\subseteq \{u_1,\ldots,u_n\}$ with $\vert A'\vert = \lceil \frac{n-1}{2} \rceil$, then $A' \cup \{ r \}$ is a defensive alliance: 
For $u_i\in A'$, $\deg_A(u_i) + 1 = 2 >\deg_{\overline{A}}(u_i)=1$ and $\deg_A(r) +1\geq \frac{n+1}{2} \geq \deg_{\overline{A}}(r)$. As we are looking for globally minimal defensive alliances, together with $r \in A$, exactly $\lceil \frac{n-1}{2} \rceil$ many of its neighbors are in $A$.

Hence, there are $n+\binom{n}{\lceil \frac{n-1}{2} \rceil}\in \mathcal{O}\left(2^n\right) = \mathcal{O}\left( \sqrt{2}^{\vert V \vert } \right)$ many globally minimal defensive alliances in $T_n$.   
\end{pf}

Observe that high-degree vertices are necessary in constructing trees with many globally minimal defensive alliances due to \autoref{cor:glob-def-all-paths}.

\longversion{\subsection{Combinatorial Results of Alliances in Trees}}\shortversion{\paragraph {Combinatorial Results of Alliances in Trees}}

\begin{lemma}\label{lem:_leaf_global_def_all}
    Let $T=(V,E)$ be a tree and $A \subseteq V$ be a globally minimal defensive alliance. Then \longversion{$}$\{ v \in A \mid \deg_A(v)=1 \} = \{ v\in V \mid \deg_V(v)\in \{ 2, 3\}\} \cap A\,.$\longversion{$}
\end{lemma}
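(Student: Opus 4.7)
The plan is to prove the two inclusions separately, using the fact (from the observation on globally minimal defensive alliances) that $A$ is connected and exploiting the tree structure of~$T$.

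For the direction $(\subseteq)$: Suppose $v \in A$ has $\deg_A(v) = 1$. The alliance condition $\deg_A(v) + 1 \geq \deg_{\overline{A}}(v)$ forces $\deg_{\overline{A}}(v) \leq 2$, hence $\deg_V(v) = \deg_A(v) + \deg_{\overline{A}}(v) \leq 3$, while clearly $\deg_V(v) \geq 1$. To rule out $\deg_V(v) = 1$, note that if $v$ were a leaf, then $\{v\}$ would be a (trivial) defensive alliance; since $\deg_A(v) = 1$ forces $|A| \geq 2$, this would give $\{v\} \subsetneq A$, contradicting global minimality.

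For the direction $(\supseteq)$: Suppose $v \in A$ with $\deg_V(v) \in \{2, 3\}$. The alliance condition already ensures $\deg_A(v) \geq 1$, so it remains to rule out $k := \deg_A(v) \geq 2$. Let $w_1, \ldots, w_k$ be the $A$-neighbors of $v$, and for each neighbor $w$ of $v$ in $T$ let $X(w)$ denote the vertex set of the component of $T - v$ containing $w$. Since $A$ is connected and $v \in A$, any vertex of $A$ lying in $X(w)$ for some $w \notin A$ would need a path in $A$ to $v$ through $w$, which is impossible; hence $A \cap X(w) = \emptyset$ for every $\overline{A}$-neighbor $w$ of $v$. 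Now consider the candidate $A' := A \setminus X(w_1)$.

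The set $A'$ contains $v$ (hence is nonempty) and excludes $w_1 \in A$ (hence is a proper subset). Checking the alliance condition vertex-by-vertex: for $u \in A' \cap X(w_i)$ with $i \geq 2$, no neighbor of $u$ lies in $X(w_1)$, and $v$ (still in $A'$) remains an $A'$-neighbor of $w_i$, so $\deg_{A'}(u) = \deg_A(u)$ and $\deg_{\overline{A'}}(u) = \deg_{\overline{A}}(u)$; for $u = v$, one computes $\deg_{A'}(v) = k - 1$ and $\deg_{\overline{A'}}(v) = \deg_V(v) - k + 1$, so the alliance condition reduces to $2k \geq \deg_V(v) + 1$, which holds in each of the remaining cases $(\deg_V(v), k) \in \{(2,2), (3,2), (3,3)\}$. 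Thus $A'$ is a defensive alliance, contradicting global minimality of $A$.

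The main obstacle is identifying the correct sub-alliance: the naive attempt of removing $v$ itself disconnects $A$ and may simultaneously break the defensive condition at several of $v$'s neighbors. By keeping $v$ and discarding only the subtree $X(w_1)$, the degree change is localized to $v$, where the slack guaranteed by $\deg_V(v) \leq 3$ and $k \geq 2$ absorbs it.
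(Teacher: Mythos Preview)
Your proof is correct and follows essentially the same approach as the paper: both directions are argued the same way, and for the key $(\supseteq)$ direction both proofs exploit that $v$ is a cut vertex and build a strictly smaller alliance from a proper subset of the subtrees hanging off~$v$. The only cosmetic difference is that the paper \emph{keeps} one subtree (taking $A_X\cup\{v\}$) while you \emph{remove} one subtree (taking $A\setminus X(w_1)$); the verification at~$v$ works out in either variant.
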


\begin{pf}
Assume there exists a $v\in A$ with $\deg_V(v)\in \{2,3\}$ and $\deg_A(v)\neq 1$. If $\deg_A(v) = 0$, then $A$ is no defensive alliance. So assume $\deg_A(v) \in \{2, 3\}$. We show the result for $\deg_V(v) = 3$. For $\deg_V(v)=2$, the proof works analogously. 
    
Since $T$ is a tree, $v$ is a separator of at least three vertex sets, say, $X,Y,Z$.  As at least two neighbors of $v$ are in $A$, without lost of generality, $A_X \coloneqq X\cap A\neq \emptyset$ and $Y \cap A \neq \emptyset$. We now want to show that $A_X\cup\{v\}$ is already a defensive alliance. Since $v$ is a separator, $\deg_A(x)=\deg_{A_X}(x)$ for each $x\in A_X$. Therefore, we only need to consider $v$, for which $\deg_{A_X}(v)+1=2>1=\deg_{\overline{A_X}}(v)$ holds. Hence, $A_X\cup\{v\}\subsetneq A$ is a defensive alliance and $A$ not globally minimal. 

This leaves us to show $\{ v \in A \mid \deg_A(v)=1 \} \subseteq \{ v\in V \mid \deg_V(v)\in \{ 2, 3\}\}$. Each leaf~$x$ is a defensive alliance by itself and would hence satisfy $\deg_{A}(x)=0$ for any globally minimal defensive alliance~$A$ that contains~$x$. If there is a $v \in A$ with $\deg_V(v)\geq 4$ and $\deg_A(v)=1$, then $\deg_A(v)+1=2\leq 3 = \deg_{\overline{A}}(v)$ would be a contradiction.
\end{pf}

Let $T=(V,E)$ be a tree. Define $I(T)= \{v\in V \mid \deg_V(v)\in \{2,3\}\}$ and $L(T)=\{v\in V \mid \deg_V(v)=1\}$. 
Let $\mathcal{D}_g(T)$ be the set of  all globally minimal defensive alliances of~$T$.  

\begin{lemma}\label{lem:neighbored_I_vertices}
    Let $T=(V,E)$ a tree with $\{v,u\}\in E$ and $v,u\in I(T)$. Define $X_v$ as the connected component of $T_v=(V,E_v)$ with $E_v=E \setminus \{e\in E \mid u\in e \wedge v\notin e \}$ which includes $v$. Analogously define $X_u$. Then $$\mathcal{D}_g(T)=\left( \mathcal{D}_g(T[X_v]) \cup \mathcal{D}_g(T[X_u]) \cup \{\{ v,u\}\} \right) \setminus \{\{ v \}, \{ u \}\}\,.$$
\end{lemma}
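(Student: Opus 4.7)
The plan is to prove the equality via two set inclusions, relying on two preliminary observations. First, $V=V_v\sqcup V_u$, where $V_v,V_u$ are the two components of $T$ after deleting the edge $\{u,v\}$; hence $X_v=V_v\cup\{u\}$ and $X_u=V_u\cup\{v\}$, with $u$ becoming a leaf of $T[X_v]$ attached to $v$ and, symmetrically, $v$ a leaf of $T[X_u]$ attached to $u$. Second, a transfer property: for every $B\subseteq V_v$, all $T$-neighbors of vertices of $B$ lie in $V_v\cup\{u\}=X_v$, so $\deg_T(x)=\deg_{T[X_v]}(x)$ for $x\in B$ and the defensive alliance condition for $B$ is identical in $T$ and in $T[X_v]$; the analogous statement holds for subsets of $V_u$ with respect to $T[X_u]$.

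For the $\subseteq$ direction, take $A\in\mathcal{D}_g(T)$ and split into cases based on $A\cap\{v,u\}$. If $v,u\in A$, then \autoref{lem:_leaf_global_def_all} forces $\deg_A(v)=\deg_A(u)=1$, so $u$ is $v$'s only $A$-neighbor and vice versa; connectedness of $A$ then yields $A=\{u,v\}$. If $v\in A$ and $u\notin A$, the fact that $u$ separates $V_v$ from $V_u$ in $T$, together with connectedness of $A$, forces $A\subseteq V_v$, and the transfer property places $A$ in $\mathcal{D}_g(T[X_v])$. The case $u\in A,\,v\notin A$ is symmetric. If $A\cap\{u,v\}=\emptyset$, connectedness again confines $A$ to $V_v$ or $V_u$, and the transfer property puts $A$ into $\mathcal{D}_g(T[X_v])$ or $\mathcal{D}_g(T[X_u])$. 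Finally $A\neq\{v\},\{u\}$, because $\deg_T(v),\deg_T(u)\geq 2$ (as $v,u\in I(T)$) prevents these singletons from being defensive alliances in $T$.

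For the $\supseteq$ direction, $\{u,v\}$ is a defensive alliance in $T$ since each endpoint satisfies $\deg_A+1=2\geq\deg_{\overline A}$, using $\deg_T(v),\deg_T(u)\leq 3$; it is globally minimal because the only nonempty proper subsets $\{v\}$ and $\{u\}$ fail the defensive alliance condition. For $A\in\mathcal{D}_g(T[X_v])\setminus\{\{u\}\}$: if $u\in A$, then since $u$ is a leaf of $T[X_v]$, connectedness of $A$ forces $v\in A$, and then applying \autoref{lem:_leaf_global_def_all} inside $T[X_v]$ at $v$ (noting $\deg_{T[X_v]}(v)=\deg_T(v)\in\{2,3\}$) together with connectedness collapse $A$ to $\{u,v\}$; otherwise $u\notin A$, so $A\subseteq V_v$, and the transfer property yields $A\in\mathcal{D}_g(T)$. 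The case $A\in\mathcal{D}_g(T[X_u])\setminus\{\{v\}\}$ is symmetric.

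The main obstacle is the bookkeeping around $\{v\}$ and $\{u\}$: these singletons are defensive alliances in $T[X_u]$ and $T[X_v]$ respectively (because $v$ and $u$ become leaves there), yet they are not defensive alliances in $T$ itself, which is exactly why the set difference $\setminus\{\{v\},\{u\}\}$ appears in the statement. Once this is sorted, the rest is a case analysis driven by \autoref{lem:_leaf_global_def_all} and the connectedness of globally minimal defensive alliances.
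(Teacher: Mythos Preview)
Your proof is correct and follows essentially the same route as the paper's: both directions are handled via the transfer property for subsets of $V_v$ (resp.\ $V_u$), connectedness of globally minimal defensive alliances, and \autoref{lem:_leaf_global_def_all} to collapse the case $u,v\in A$ to $A=\{u,v\}$. Your write-up is in fact more careful than the paper's, which asserts ``$u\notin A$'' for $A\in\mathcal{D}_g(T[X_v])\setminus\{\{u\}\}$ without justification; note, though, that your subcase $u\in A$ in the $\supseteq$ direction is vacuous (since $u$ is a leaf of $T[X_v]$, the singleton $\{u\}$ is already a defensive alliance there, forcing $A=\{u\}$ by global minimality), so you could shortcut that step.
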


\begin{pf}
Clearly, $\{v,u\}$ is a globally minimal defensive alliance of $T$ but not of $T[X_v]$ or $T[X_u]$. Conversely, $\{ v \}$ is a defensive alliance of $T[X_u]$ and $\{u\}$ one of $T[X_v]$ but none of these are defensive alliances of~$T$. 

Let $A \in \mathcal{D}_g(T[X_v]) \setminus \{\{ u \}\}$. Since $u\notin A$ and the remaining vertices have the same degree in $T[X_v]$ as in~$T$, $A$ is also a globally minimal defensive alliance of~$T$. Analogously, for each $A \in \mathcal{D}_g(T[X_u]) \setminus \{\{ v \}\}$, $A\in \mathcal{D}_g(T)$.

Let $A\in \mathcal{D}_g(T)$. If $A \cap X_v \neq \emptyset$ and $A \cap X_u \neq \emptyset$, then $v,u\in A$ and so $A =\{v,u\}$ by minimality. Hence, we now assume $A \subsetneq X_v \setminus \{u\}$ (or, symmetrically, $A \subsetneq X_u \setminus \{v\}$). Analogously to the arguments above, $A\in \mathcal{D}_g(T[X_u])$ (or $A\in \mathcal{D}_g(T[X_v])$, respectively).
\end{pf}

\begin{lemma}\label{lem:I_to_V_ratio}
    Let $T=(V,E)$ be a tree. If $\vert I(T) \vert \geq \frac{\vert V \vert}{2}$, then $\binom{I(T)}{2}\cap E \neq \emptyset$.
\end{lemma}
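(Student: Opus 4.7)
The plan is to argue by contradiction: suppose $I(T)$ is an independent set in~$T$, and derive a contradiction with the hypothesis $|I(T)|\geq |V|/2$. The main tool is double-counting the edges crossing between $I(T)$ and $V\setminus I(T) = L(T) \cup H(T)$, where I set $H(T) \coloneqq \{v\in V : \deg_V(v)\geq 4\}$, so that every boundary edge has a leaf or high-degree endpoint.

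To make this quantitative, let $n_i$ denote the number of vertices of degree~$i$ in~$T$. Combining $\sum_i n_i = |V|$ with the handshake identity $\sum_i i\,n_i = 2(|V|-1)$ yields the classical tree identity $n_1 = 2 + n_3 + \sum_{i\geq 4}(i-2)\,n_i$, which I will use to eliminate~$n_1$ in every inequality that appears. First, rewriting $|I(T)| \geq |V|/2$ as $n_2 + n_3 \geq n_1 + \sum_{i\geq 4} n_i$ and substituting for~$n_1$ gives the lower bound $n_2 \geq 2 + \sum_{i\geq 4}(i-1)\,n_i$.

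Second, assuming $I(T)$ is independent, counting boundary edges from the $I(T)$-side gives exactly $2n_2 + 3n_3$, since every edge incident to a vertex in $I(T)$ must then cross to $V\setminus I(T)$. Counting from the other side yields the upper bound $n_1 + \sum_{i\geq 4} i\,n_i$, since each leaf contributes at most one to this count (having only one edge at all) and each degree-$i$ vertex of $H(T)$ contributes at most~$i$. Substituting the tree identity for~$n_1$ and dividing by two should simplify this to $n_2 + n_3 \leq 1 + \sum_{i\geq 4}(i-1)\,n_i$. Chaining this with the previous lower bound through $n_2 \leq n_2 + n_3$ gives $2 \leq 1$, the required contradiction.

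No step beyond elementary arithmetic is involved; the only place that needs some care is verifying that the two bounds $2n_2 + 3n_3 \leq n_1 + \sum_{i\geq 4} i\,n_i$ (from the double count) and $n_2 + n_3 \geq n_1 + \sum_{i\geq 4} n_i$ (from the hypothesis) really do produce matching coefficients once~$n_1$ is eliminated. Since the $\sum_{i\geq 4}(i-1)\,n_i$ terms appear on both sides with identical coefficients, the constants $2$ and~$1$ are what finally clash, and no auxiliary structural argument about the tree is needed.
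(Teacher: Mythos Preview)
Your proof is correct. The double count of the $I(T)$--boundary edges together with the tree identity $n_1=2+n_3+\sum_{i\ge 4}(i-2)n_i$ yields exactly the two inequalities you state, and they collide in $2\le 1$ as claimed. (The only implicit assumption is $|V|\ge 2$ so that no degree-$0$ vertex appears; for $|V|=1$ the hypothesis $|I(T)|\ge |V|/2$ fails anyway.)

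Your route, however, is genuinely different from the paper's. The paper argues structurally: assuming $I(T)$ is independent, it first shows every $v\in I(T)$ has a neighbour of degree $\ge 4$, then \emph{replaces} each degree-$3$ vertex $v\in I(T)$ by two degree-$2$ vertices attached to that high-degree neighbour, checking that both the independence of $I$ and the inequality $|I|\ge |V|/2$ are preserved. Once all $I$-vertices have degree~$2$, it suppresses them to obtain a smaller tree $T_D$ on $V'\setminus I(T')$; independence of $I(T')$ guarantees that each edge of $T_D$ is subdivided at most once, so $|I(T')|\le |E_D|=|V'|-|I(T')|-1$, contradicting $|I(T')|\ge |V'|/2$. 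In contrast, you never touch the tree: the degree-sequence identity does all the work, and the degree-$3$ vertices are handled automatically by the coefficients rather than by a reduction. Your argument is shorter and entirely arithmetic; the paper's argument is more constructive and makes visible \emph{why} the bound is tight (each $I$-vertex must sit on its own edge of the contracted tree), at the cost of the somewhat delicate degree-$3$ replacement step.
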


\begin{pf}
Assume $\vert I(T) \vert \geq \frac{\vert V \vert}{2}$ and $\binom{I(T)}{2}\cap E = \emptyset$.
Hence, each $v\in I(T)$ has a neighbor of degree at least~$4$, as otherwise $v$ would have a neighbor in $I(T)$ or $V=L(T)\cup\{ v \}$, i.e., $T$ would be a star $K_{1,2}$ or $K_{1,3}$, contradicting  $\vert I(T) \vert \geq \frac{\vert V \vert}{2}$ as $I(T)=\{v\}$.  If there is a $v \in I(T)$ with $\deg_V(v)=3$ and $N(v)=\{x,y,z\}$, where $\deg_V(x)\geq 4$, then define $T'\coloneqq (V',E')$ with $V'\coloneqq (V \setminus \{v\}) \cup \{a_v, b_v\}$ and $E'\coloneqq \left( E\setminus\{\{v,x\}, \{v,y\}, \{v,z\}\}\right)\cup \{\{x, a_v\}, \{x, b_v\}, \{y,a_v\}, \{z, b_v\}\}$. Clearly, the assumption also holds for $T'$ and $\vert I(T') \vert - \frac{\vert V' \vert}{2}\geq \vert I(T) \vert - \frac{\vert V \vert}{2}$. Therefore, we can further assume that there is no vertex of degree~$3$. Define $T_D=(V_D,E_D)$ with $V_D\coloneqq V' \setminus I(T')$ and $$E_D \coloneqq \left( E \setminus \{e\in E \mid e \cap I(T') \neq \emptyset \} \right) \cup \{ \{u,w\}\mid \exists v\in I(T'): \: \{u,w\} =N_{T'}(v) \}\,.$$ We can get $T'$ by subdividing the edges from~$T_D$. Therefore, $\vert I(T')\vert \leq \vert E_D\vert = \vert V'\vert- \vert I(T')\vert $. This is a contradiction to $\vert I(T') \vert \geq \frac{\vert V' \vert}{2}$.
\end{pf}

\longversion{\subsection{Extension Problem on Trees}}
\shortversion{\paragraph{Extension Problem on Trees}}

For our polynomial-delay result, we use the idea of extension algorithms.
\begin{theorem}\label{thm:ExtDA_onTree}
    \ExtGMDefAll is poly\-nomial-time solvable on trees. 
\end{theorem}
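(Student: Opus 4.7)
The plan is to design a polynomial-time algorithm built around a strong structural observation: every GMDA $A$ of size at least $2$ on a tree satisfies the \emph{tightness} condition $\deg_A(u) = f(\deg_V(u))$ for each $u \in A$, where $f(2)=f(3)=1$ and $f(d)=\lceil (d-1)/2\rceil$ for $d \geq 4$. The inequality $\deg_A(u) \geq f(\deg_V(u))$ combines the defensive alliance inequality with \autoref{lem:_leaf_global_def_all}; for the reverse $\leq$, if $u \in A$ had slack $\deg_A(u) > f(\deg_V(u))$, we could pick a neighbor $v$ of $u$ in $A$ and delete the entire subtree of $T[A]$ hanging off $v$ away from $u$, producing a strictly smaller defensive alliance and contradicting global minimality.

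After handling the trivial cases ($U \cap N \neq \emptyset$; the singleton case of leaves of $T$; and $U = \emptyset$, which I treat by first trying any leaf of $T$ outside $N$ and otherwise iterating over each $v \in V \setminus N$ with $U = \{v\}$), I would compute $T_U$, the minimal subtree of $T$ spanning $U$ (a union of the pairwise paths between the vertices of $U$, hence polynomial-time computable). Any GMDA $A \supseteq U$ is connected, so $V(T_U) \subseteq A$; hence output NO if $V(T_U) \cap N \neq \emptyset$ or if $|U| \geq 2$ and $V(T_U)$ contains a leaf of $T$. For each $v \in V(T_U)$, tightness forces $v$'s degree in $T_U$ to be at most $f(\deg_V(v))$ --- in particular, every $v$ with $\deg_V(v) \in \{2,3\}$ must be a leaf of $T_U$; if any such check fails, output NO. Let $k_v$ be $f(\deg_V(v))$ minus $v$'s degree in $T_U$: this is the exact number of external neighbors of $v$ that must be added to $A$.

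Since distinct vertices of $V(T_U)$ have disjoint external neighborhoods (because $T$ is a tree), the remaining task decomposes across the external branches. For each external neighbor $w$ of some $v \in V(T_U)$, I would run a bottom-up tree DP on the branch $B_{v,w}$ (the subtree of $T$ rooted at $w$ and hanging off $v$) deciding whether $w$ is \emph{admissibly includable} assuming $v \in A$. At each vertex $u$ of the branch, the DP checks whether $u$ may appear in $A$ as a leaf of $T[A]$ (requires $\deg_V(u) \in \{2,3\}$, $u \notin N$, and all descendants of $u$ excluded) or as an internal vertex of $T[A]$ (requires $\deg_V(u) \geq 4$, $u \notin N$, and at least $f(\deg_V(u)) - 1$ admissibly includable children, of which exactly that many are then included). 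The DP runs in time linear in the branch size. Finally, for each $v$ with $k_v > 0$, I verify that at least $k_v$ external neighbors of $v$ are admissibly includable; if so, pick any $k_v$ of them and return YES, otherwise NO.

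The main obstacle is proving global minimality of the constructed $A$. The DA property is immediate from tightness. Suppose some nonempty $B \subsetneq A$ were a DA; shrinking to a minimal such subset we may assume $B$ is itself a GMDA and hence connected. Because $T[A]$ is connected and $B \subsetneq A$, there is a boundary vertex $u \in B$ with a neighbor in $A \setminus B$, giving $\deg_B(u) \leq \deg_A(u) - 1 = f(\deg_V(u)) - 1$. If $u$ is a leaf of $T[A]$ then $\deg_B(u) = 0$ while $\deg_V(u) \geq 2$, so DA fails at $u$; otherwise $u$ is internal in $T[A]$ and $\deg_B(u) < \lceil (\deg_V(u)-1)/2 \rceil$, again violating DA at $u$. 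This contradicts $B$ being a DA, so $A$ is globally minimal. The soundness of the NO answers follows symmetrically from the tightness argument as a necessary condition, and the entire procedure runs in time polynomial in $|V(T)|$.
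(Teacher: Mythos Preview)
Your proposal is correct and takes a genuinely different route from the paper. The paper proves the theorem via a recursive search (its Algorithm~1): after replacing $U$ by the smallest connected superset, it repeatedly picks a deficient vertex $v\in U$, tries each uncommitted neighbour $u\in N(v)\setminus(U\cup N)$, recurses with $U\cup\{u\}$, and on failure moves $u$ into~$N$; the polynomial bound is then argued from the claim that each vertex plays the role of the ``newly added $u$'' in at most one such call. Your approach instead isolates and proves a sharp structural lemma---in any GMDA $A$ on a tree with $|A|\ge 2$, every $u\in A$ satisfies $\deg_A(u)=\lceil(\deg_V(u)-1)/2\rceil$ exactly---which strictly strengthens \autoref{lem:_leaf_global_def_all} (the paper only pins down the equality for vertices of degree~$2$ or~$3$). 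With tightness in hand, you reduce the extension question to independent per-branch feasibility checks solvable by a one-pass bottom-up DP, and your boundary-vertex argument for global minimality of the constructed set is clean. What your approach buys is an explicit, essentially linear-time procedure whose correctness proof is fully self-contained; what the paper's approach buys is that it never needs to discover or prove the full tightness statement, relying instead on a generic extension-search template together with the black-box minimality test from \cite{BazFerTuz2019}.
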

\begin{algorithm}[t]
\caption{Solving instances of \textsc{ExtGMDefAll}}\label{alg}
\begin{algorithmic}[1]
\Procedure{ExtGMDefAll Tree Solver}{$T,U,N$}\newline
 \textbf{Input:} A tree $T=\left(V,E\right)$ and two disjoint vertex sets $U,N\subseteq V$.\newline
 \textbf{Output:} Is there a globally minimal defensive alliance $A\subseteq V \setminus N$ \longversion{with}\shortversion{s.t.} $U \subseteq A$?
  \If{$U$ is not connected}
  \State Replace $U$ by the smallest connected vertex set containing~$U$.
  \EndIf
 \If{$U\cap N\neq\emptyset$}\longversion{\State} \textbf{return} \no.
 \EndIf
  \If{$U$ is a defensive alliance}
   \textbf{return} $\text{IsMinimalDefAll}(T,U)$.
   \State  (Test with the algorithm of Proposition 5 of \cite{BazFerTuz2019} on $(T,U)$)
  \EndIf
  \State Let $v \in U$ be a vertex with $\deg_U(v) + 1 < \deg_{\overline{U}}(v)$.
  \For{$u\in N(v)\setminus\left( U\cup N \right)$}
    \If{$\text{ExtGMDefAll Tree Solver}(T,U \cup \{ u \}, N)$} \textbf{return} \yes.
    \Else\ {Add $u$ to $N$.}
    \EndIf
\EndFor 
\State \textbf{return} \no.
\EndProcedure
\end{algorithmic}
\end{algorithm}
\begin{pf}     
Let $T=(V,E)$ be a tree and $U,N \subseteq V$. Since globally minimal defensive alliances are connected and between two vertices in a tree there is a unique path, put all vertices from paths between any two vertices in~$U$ into $U$. (By working bottom-up in an arbitrarily rooted tree, this can be implemented to run in linear time.) If $U\cap N \neq \emptyset$, this is a \no-instance. First check if $U$ is already a defensive alliance. Then, test if $U$ is a globally minimal defensive alliance: use the algorithm from Proposition~5 of \cite{BazFerTuz2019} and return a possible \yes-answer. 
    
Assume $U$ is no defensive alliance. Then there is a $v \in U$ with $\deg_U(v) + 1 < \deg_{\overline{U}}(v)$. If there is a $u \in N(v) \setminus (U \cup N)$ then run this algorithm with $U\cup \{u\}$ instead of~$U$. For the case that this recursive call returns \yes, also return \yes. Otherwise, put $u$ into $N$ and go on with the next vertex of  $N(v) \setminus (U \cup N)$. 

Since the algorithm returns \yes if a recursive call returns \yes or $U$ is a globally minimal defensive alliance and we only increase $U$, the \yes is correct. For the sake of contradiction, assume our algorithm returns \no even when there is a globally minimal defensive alliance $A \subseteq V \setminus N$ with $U \subseteq A$.  Choose $U$ such that $\vert U\vert$ is maximal with this property. Further, $\vert N\vert$ should be maximal. We can assume $U$ is not a defensive alliance, as $U\subseteq A$ and $U$ is no globally defensive alliance (otherwise, our algorithm would return \yes). Therefore, there is a $v\in U$ with $\deg_U(v)+1<\deg_{\overline{U}}(v)$. Let $u\in (A \cap N(v)\setminus (U \cup N)$. Now our algorithm would call itself also with $U\cup \{u\} \subseteq A$: because we assume that it returned \no it has to try all these neighbors. However, by the maximality of $U$, the algorithm should return \yes when given $(G,U\cup \{u\},N)$ as $U\cup \{u\} \subseteq A$. Therefore, our algorithm must also return \yes on  $(G,U,N)$, \longversion{which is an obvious contradiction to}\shortversion{contradicting} our assumption.

\shortversion{Clearly, the algorithm runs in polynomial time.}\longversion{Beside the recursive call, the algorithm only needs polynomial time. Furthermore, we call $\text{ExtGMDefAll Tree Solver}(T,U \cup \{ u \}, N)$ for each vertex~$u$ in~$V$ at most once. Therefore, the algorithm runs in polynomial time.}
\end{pf}

\longversion{\subsection{Enumeration Algorithm}}\shortversion{\paragraph{Enumeration Algorithm}}
\begin{theorem}
   All globally minimal defensive alliances on a tree of order $n$ can be enumerated in time $\mathcal{O}^*\left( \sqrt{2}^n\right)$ with polynomial delay.
\end{theorem}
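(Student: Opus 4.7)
The plan is to combine the polynomial-time extension oracle of \autoref{thm:ExtDA_onTree} with a combinatorial $\mathcal{O}^*(\sqrt{2}^n)$ upper bound on $|\mathcal{D}_g(T)|$; a standard flashlight-search analysis then packages these two ingredients into a polynomial-delay enumeration whose total running time is $\mathcal{O}^*(\sqrt{2}^n)$.

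For the algorithm, I would first emit every leaf $v\in L(T)$ as the singleton alliance $\{v\}$, and then enumerate the non-singleton alliances by flashlight search. At each node, the algorithm maintains a pair $(U,N)$ of disjoint vertex sets (forced in / forbidden), picks an unassigned $w$, and recurses into $(U\cup\{w\},N)$ and $(U,N\cup\{w\})$; before entering either branch it invokes \autoref{thm:ExtDA_onTree} to verify that the branch still admits an extension to a globally minimal defensive alliance, pruning otherwise. Whenever $U$ is already such an alliance, which can be checked in polynomial time via Proposition~5 of~\cite{BazFerTuz2019}, the algorithm outputs $U$. Polynomial delay is immediate: the pruning oracle ensures every explored node leads to an output, and the branching depth is at most $n$.

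To bound the total running time, it thus suffices to show $|\mathcal{D}_g(T)|\leq C\cdot \sqrt{2}^n$ for some constant $C$. I would prove this by strong induction on $n=|V(T)|$. In the inductive step, if there is an edge $\{u,v\}\in E\cap\binom{I(T)}{2}$, \autoref{lem:neighbored_I_vertices} (together with the observation that $\{u\}\in \mathcal{D}_g(T[X_v])$ and $\{v\}\in \mathcal{D}_g(T[X_u])$, but neither lies in $\mathcal{D}_g(T)$) yields $|\mathcal{D}_g(T)|=|\mathcal{D}_g(T[X_v])|+|\mathcal{D}_g(T[X_u])|-1$; since $u,v$ have $T$-degree at least~$2$, both $|X_v|$ and $|X_u|$ lie in $[3,n-1]$, and an elementary computation (valid for all sufficiently large~$n$, finitely many small cases being absorbed into~$C$) shows that $\sqrt{2}^{|X_v|}+\sqrt{2}^{|X_u|}\leq \sqrt{2}^n$ even though $|X_v|+|X_u|=n+2$. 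Otherwise, by \autoref{lem:I_to_V_ratio}, $|I(T)|<n/2$; a non-singleton $A\in \mathcal{D}_g(T)$ contains no leaf of~$T$ (else that leaf alone would be a strictly smaller defensive alliance), and \autoref{lem:_leaf_global_def_all} identifies the set $L_A=\{v\in A:\deg_A(v)=1\}$ of $T[A]$-leaves as $A\cap I(T)$. In a tree, a connected subset is uniquely determined by its leaf set (as the Steiner subtree they span, since every $v\in A\setminus L_A$ has $\deg_A(v)\geq 2$ and thus lies on a $T[A]$-path between two elements of $L_A$), so $A\mapsto L_A$ is injective into $2^{I(T)}$. This yields at most $2^{|I(T)|}<\sqrt{2}^n$ non-singleton alliances in addition to the at most $|L(T)|\leq n$ singleton ones.

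The main obstacle is the bookkeeping in the reducible case: because $X_v$ and $X_u$ overlap exactly in $\{u,v\}$, we have $|X_v|+|X_u|=n+2$ (rather than $n$), so the recursion $f(n)\leq f(|X_v|)+f(|X_u|)-1$ only just barely yields the bound $C\sqrt{2}^n$, and the pertinent inequality $\sqrt{2}^a+\sqrt{2}^b\leq \sqrt{2}^{a+b-2}$ fails for the smallest values of $a+b$, forcing a handful of base cases to be verified by hand before the general induction takes hold.
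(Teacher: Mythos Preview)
Your proposal is correct and rests on the same three combinatorial lemmas and the same extension oracle as the paper; the difference is only in how you package them. The paper branches \emph{directly} on the vertices of $I(T)$: after using \autoref{lem:neighbored_I_vertices} to reduce to the case $|I(T)|<n/2$ (via \autoref{lem:I_to_V_ratio}), it makes at most $2^{|I(T)|}<\sqrt{2}^{\,n}$ decisions, calling the oracle of \autoref{thm:ExtDA_onTree} after each to guarantee polynomial delay, and then reconstructs each alliance as the Steiner closure of $U\cap I(T)$. You instead run a generic flashlight search over \emph{all} vertices and derive the $\mathcal{O}^*(\sqrt{2}^{\,n})$ time bound \emph{a posteriori}, by proving a standalone upper bound $|\mathcal{D}_g(T)|\le C\sqrt{2}^{\,n}$ via strong induction (splitting with \autoref{lem:neighbored_I_vertices} in the reducible case and invoking the leaf-set injection into $2^{I(T)}$ in the irreducible case).

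Both routes are valid; yours is arguably cleaner in separating the polynomial-delay mechanism from the counting argument, and it makes explicit the constant-juggling in the recursion $f(a)+f(b)-1\le f(a{+}b{-}2)$ that the paper's terse ``we use \autoref{lem:neighbored_I_vertices} extensively'' leaves implicit. One small clarification: the inequality $\sqrt{2}^{\,a}+\sqrt{2}^{\,b}\le \sqrt{2}^{\,a+b-2}$ (for $a,b\ge 3$) indeed holds precisely when $a+b\ge 9$ (the only failing splits are $(3,3),(3,4),(3,5)$ up to symmetry), so your remark that it ``fails for the smallest values of $a+b$'' is accurate and the finitely many trees on at most six vertices suffice as base cases.
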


\begin{pf}   
    Let $T=(V,E)$ be a tree. We use \autoref{lem:neighbored_I_vertices} extensively. By \autoref{lem:I_to_V_ratio}, we assume $I(T) < \frac{\vert V \vert}{2}$. First, we enumerate all defensive alliances with only one vertex. After that, we branch on each vertex in $I(T)$ if it is the solution or not. After each decision we run the algorithm of \autoref{thm:ExtDA_onTree}. After we made the decision for each vertex $I(T)$, we connect all vertices from $U$ by the unique path. 

    We do not enumerate any defensive alliance twice, as the defensive alliances differ in $I(T)$. This argument, together with the use of \autoref{thm:ExtDA_onTree}, implies that this algorithm runs with polynomial delay.

    This leaves to show that we enumerate each globally minimal defensive alliance. So let $A \in \mathcal{D}_g(T)$ with $\vert A \vert > 1$. Hence, $A$ is connected and $T[A]$ is a subtree. By \autoref{lem:_leaf_global_def_all}, $\vert I(T) \cap A \vert \leq 2$. Let $L' = I(T) \cap A$.  By our branching, we should come into this situation, otherwise this would contradict \autoref{thm:ExtDA_onTree}. Let $A'$ be the minimal globally defensive alliances which is output in this case. Since $A$ is connected, $A' \subseteq A$. If $A' \subsetneq A$, then $A'$ is no defensive alliance. Assume $A'$ is no defensive alliance. Then there is a $v\in A'$ with $\deg_{A'}(v) + 1 < \deg_{\overline{A'}}(v)$. Therefore, there exists a $u \in (N(v) \cap A)\setminus A'$. This implies that there is a leaf of $T[A]$ below $u$ which is in $A\setminus A'$, contradicting the construction of~$A'$.  
\end{pf}

We want to stress once more that due to our lower-bound example, the previous algorithm cannot be substantially improved.

\section{Listing Locally Minimal Defensive Alliances on Trees}
\label{sec:tree-local-mindA}

We first show that there are very simple families of trees (namely, trees that are both caterpillars and subdivided stars) that actually host (nearly) as many locally minimal defensive alliances as described by the upper bound for general graphs given in \autoref{cor:Enum_local_general}. Then, we prove that at least for paths, less locally minimal defensive alliances can be found, but still exponentially many. Notice that this contrasts our findings in the global case, see \autoref{cor:glob-def-all-paths}.

\begin{toappendix}
    \begin{figure}[bt]
    \centering
    	
	\begin{tikzpicture}[transform shape]
		      \tikzset{every node/.style={ fill = white,circle,minimum size=0.3cm}}
			\node[draw,label={left:$a$}] (a) at (0,0) {};
			\node[draw,label={below:$b$}] (b) at (1,0) {};
			\node[draw,label={below:$c$}] (c) at (2,0) {};
			\node[draw,label={left:$v_1$}] (v1) at (-0.75,-1.5) {};
			\node[draw,label={right:$v_k$}] (vk) at (0.75,-1.5) {};
            \node at (0,-1.5) {\ldots};			
            \path (a) edge[-] (b);
			\path (b) edge[-] (c);		
            \path (a) edge[-] (v1);
			\path (a) edge[-] (vk);
        \end{tikzpicture}

    \caption{\label{fig:defall_caterpillar}Caterpillar construction for \autoref{lem:upperbound_local_caterpillar}.}
\end{figure}
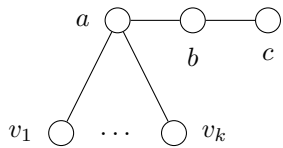
    
\end{toappendix}

\begin{lemma}\label{lem:upperbound_local_caterpillar}
    There is a class of caterpillars (\longversion{that can be also interpreted as}\shortversion{or of} subdivided stars) $T_n=(V_n,E_n)$ of order $n$  with $\Theta\left(2^n\right)$
    many locally minimal defensive alliances. 
\end{lemma}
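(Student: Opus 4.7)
The plan is to use the caterpillar $T_n$ shown in Figure~\ref{fig:defall_caterpillar}: a three-vertex spine $a{-}b{-}c$ together with $k = n-3$ leaves $v_1,\ldots,v_k$ attached at $a$. This graph is simultaneously a caterpillar (with spine $a,b,c$) and a subdivided star (with center $a$, where the ray to $c$ is subdivided once by $b$). The family that will yield exponentially many locally minimal defensive alliances consists of the sets $A_S := \{a,b\} \cup S$ for $S \subseteq \{v_1,\ldots,v_k\}$ whose size $s := |S|$ is chosen so that $A_S$ is a defensive alliance but every single-vertex deletion destroys this property.

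First I will pin down the right value of $s$. Every $v_i \in S$ and the degree-$2$ spine vertex $b$ (which has one neighbor $a$ inside and one neighbor $c$ outside) satisfy $\deg_{A_S}(\cdot)+1 \geq \deg_{\overline{A_S}}(\cdot)$ trivially, so the only binding constraint comes from $a$: namely $s + 2 \geq k - s$, i.e.\ $s \geq \lceil (k-2)/2 \rceil$. Next I will verify local minimality by checking the three types of single-vertex deletions. Removing $a$ leaves $b$ with no neighbor inside and both neighbors outside, killing the defensive property at $b$. Removing $b$ or any $v_i \in S$ preserves $b$'s constraint (or removes $b$ altogether) but tightens the constraint at $a$ to $s \geq \lceil k/2 \rceil$. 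Hence the choice $s = \lceil (k-2)/2 \rceil$ straddles the two thresholds: $A_S$ itself is a defensive alliance, yet any one-vertex deletion drops us below the strictly larger requirement at $a$. So every such $A_S$ is locally minimal.

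Finally I will count: the number of subsets $S \subseteq \{v_1,\ldots,v_k\}$ of size $\lceil (k-2)/2 \rceil$ is $\binom{k}{\lceil (k-2)/2\rceil}$, which by Stirling's estimate is $\Theta\bigl(2^k/\sqrt{k}\bigr)$ and hence $\Theta^*(2^n)$, matching the upper bound of \autoref{cor:Enum_local_general} up to a polynomial factor. The only mildly delicate point is that the defensive threshold at $a$ and the threshold after a deletion have to sit exactly one integer apart: this requires a small parity split between $k$ even and $k$ odd, but in both cases the gap is exactly one, so the argument goes through uniformly.
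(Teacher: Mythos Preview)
Your argument is correct and follows essentially the same route as the paper: the same caterpillar $T_n$ with spine $a{-}b{-}c$ and $k=n-3$ leaves at~$a$, the same family $A_S=\{a,b\}\cup S$ with $|S|=\lceil(k-2)/2\rceil$, the same verification that removing $a$ kills the alliance at~$b$ while removing $b$ or any $v_i$ kills it at~$a$, and the same binomial count. Your write-up is in fact slightly more careful than the paper's in two respects: you explicitly isolate the two thresholds $\lceil(k-2)/2\rceil$ and $\lceil k/2\rceil$ and note they differ by exactly one in both parities, and you correctly record the count as $\Theta^*(2^n)$ (i.e., $\Theta(2^n/\sqrt{n})$) rather than a literal $\Theta(2^n)$, which is the honest statement given that the central binomial coefficient carries a $1/\sqrt{k}$ factor.
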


\begin{pf}
By the algorithm of \autoref{cor:Enum_local_general}, we know there are at most $2^n$ locally minimal defensive alliances on a graph of order~$n$. 

We now show that this trivial bound can be reached. To this end, define $T_{k+3}=T=(V,E)$ with $k \in \mathbb{N}$, $V=\{a,b,c,v_1,\ldots,v_k\}$ and $E = \{\{a,b\}, \{b,c\}\} \cup \{\{a,v_i\} \mid i \in [k]\}$.\longversion{ This construction is illustrated in \autoref{fig:defall_caterpillar}.}

Let $A' \subseteq \{v_1,\ldots,v_k\}$ with $\vert A' \vert =\lceil \frac{k-2}{2}\rceil$. Define $A=\{ a,b\} \cup A'$. Since the vertices $v_i \in A'$ have degree~1, $\deg_A(v_i)\geq \deg_{\overline{A}}(v_i)$. Further, $\deg_{A}(b) + 1 =2 > 1 = \deg_{\overline{A}}(b)$ and $\deg_{A}(a) + 1 \geq \frac{k-2}{2} + 1 + 1 =  \frac{k+2}{2} = k - \frac{k - 2}{2} \geq  \deg_{\overline{A}}(a)$. Hence, $A$ is a defensive alliance. 

$A\setminus \{a\}$ is no defensive alliance, since $\deg_{A\setminus \{a\}}(b)+1=1<2=\deg_{\overline{A\setminus \{a\}}}(b)$. For $v\in A\setminus \{a\}$,  $A\setminus \{v\}$ is also no defensive alliance, as $\deg_{A\setminus \{v\}}(a)+1=\lceil \frac{k-2}{2}\rceil+1 \leq \frac{k-1}{2} + 1< \frac{k+2}{2}\leq\deg_{\overline{A\setminus \{v\}}}(a)$.

As also leaves by themselves form defensive alliances, in $T_{n}$ (of order~$n$), one finds
$(n-3)+\binom{n-3}{\lceil \frac{n-5}{2}\rceil}\approx \gamma\cdot 2^n$ many  locally minimal defensive alliances, for some constant~$\gamma$.
\end{pf}

It is an open question if all locally minimal defensive alliances can be enumerated with polynomial delay.

We now turn our attention to paths, a\longversion{n admittedly} quite restricted class of trees. Finally, here we find fewer  locally minimal defensive alliances than in the general case.

\begin{lemma}\label{lem:count_local_on_path}
On a path of order $n$, there are $f_{n-2} + 2 \in \mathcal{O}(1.4656^{n})$ many locally minimal defensive alliances, with $f_1=0$, $f_2=1$, $f_3=2$ and, for $n\geq 4$, $f_n = f_{n-1} + f_{n-3} + 1$ .
\end{lemma}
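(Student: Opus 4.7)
The plan is to first characterize exactly which vertex subsets of $P_n$ form locally minimal defensive alliances, then translate the characterization into a domino-tiling count on a shorter path, from which the recurrence and the exponential bound follow by elementary means.

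\textbf{Classification.} Write $P_n = v_1 v_2 \cdots v_n$. The two singleton alliances $\{v_1\}$ and $\{v_n\}$ are trivially locally minimal and will account for the additive constant $2$. The first main step is to show that any locally minimal defensive alliance $A$ with $|A|\ge 2$ must lie in the set of inner vertices $\{v_2,\dots,v_{n-1}\}$. If $v_1 \in A$ and $v_2 \notin A$, then $v_1$ is disconnected from $A\setminus\{v_1\}$ in $P_n$, so $A\setminus\{v_1\}$ remains a non-empty defensive alliance. If $v_1,v_2 \in A$, a short case distinction on whether $v_3 \in A$ shows that one of $A\setminus\{v_1\}$ or $A\setminus\{v_2\}$ is still defensive; the key observation is that $\{v_1\}$ by itself is already defensive, so dropping $v_2$ does not damage $v_1$. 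Hence $A \subseteq \{v_2,\dots,v_{n-1}\}$, and symmetrically $v_n \notin A$.

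\textbf{Structure on inner vertices.} For $A$ contained in the inner vertices, every $v \in A$ has degree two in $P_n$, so the defense inequality reduces to $\deg_A(v) \ge 1$, i.e., every $A$-vertex has at least one neighbor in $A$; consequently every maximal run of consecutive $A$-vertices has length at least $2$. Next I rule out runs of length at least $3$: if $v_i,v_{i+1},v_{i+2} \in A$ with $v_{i-1} \notin A$, then deleting $v_i$ leaves $v_{i+1}$ with exactly one $A$-neighbor ($v_{i+2}$) and one non-$A$-neighbor, so $A\setminus\{v_i\}$ is still defensive, contradicting local minimality. Conversely, any $A \subseteq \{v_2,\dots,v_{n-1}\}$ that is a disjoint union of length-$2$ runs separated by at least one gap is easily checked to be locally minimal defensive, since deleting either vertex of such a ``domino'' turns its partner into an isolated interior vertex, violating defense.

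\textbf{Counting and asymptotics.} Let $g_m$ denote the number of subsets of a path on $m$ vertices that decompose into disjoint length-$2$ dominoes separated by at least one gap (the empty set included). Splitting on whether the first position is used yields $g_m = g_{m-1} + g_{m-3}$ for $m \ge 3$, with $g_0 = g_1 = 1$ and $g_2 = 2$; setting $f_m := g_m - 1$ produces exactly the recurrence $f_m = f_{m-1} + f_{m-3} + 1$ with the stated initial values. Identifying the $m = n-2$ inner vertices of $P_n$ with an $m$-vertex path, the total count of locally minimal defensive alliances is $f_{n-2} + 2$. For the asymptotic bound, the characteristic polynomial of the homogeneous recurrence is $x^3 - x^2 - 1$, whose dominant real root is $\rho \approx 1.4656$, yielding $f_{n-2} \in \mathcal{O}(1.4656^{n})$.

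The main obstacle, and the only step demanding genuine care, is the endpoint analysis: showing that no locally minimal defensive alliance of size at least $2$ contains $v_1$ or $v_n$ relies on the peculiar fact that singletons at leaves are already defensive and must be handled by an exhaustive (but short) case analysis of the behavior near the endpoints. Once this and the run-length restriction are in place, the rest is an entirely standard tiling count with a linear recurrence.
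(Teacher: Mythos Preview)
Your proposal is correct and follows essentially the same approach as the paper: both arguments first show that the only alliances touching an endpoint are the singletons $\{v_1\}$ and $\{v_n\}$, then establish that the remaining locally minimal defensive alliances are exactly the nonempty unions of vertex-disjoint ``dominoes'' $\{v_i,v_{i+1}\}$ on the inner path, and finally derive the recurrence $f_m=f_{m-1}+f_{m-3}+1$ (the paper via explicit bijections on $\mathcal{A}_n$, you via the tiling count $g_m=g_{m-1}+g_{m-3}$ and the substitution $f_m=g_m-1$, which is the same thing). The only cosmetic difference is the order of the structural steps---the paper bounds run length first and then treats the endpoints, while you do the reverse---but the content is identical.
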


\begin{pf}
Let $P_{n}\coloneqq( V_{n}, E_n)$ with $V_n \coloneqq \{v_1,\ldots,v_n\}$, $E_n \coloneqq \{\{ v_i,v_{i+1}\} \mid i\in [n]\}$ and $A \subseteq V_n$ being a locally minimal defensive alliance. 
    
First, we show that any connected component of  $P_n[A]$ has at most 2 vertices. We prove this by contradiction. Assume that there is an $i\in [n-2]$ with $v_i,v_{i+1},v_{i+2}\in A$. Let $i$ be the largest such integer. This implies that $v_{i+3}$ either does not exist or $v_{i+3} \notin A$. In both cases, $v_{i+1}$ is the only neighbor of $v_{i+2}$ in~$A$. Hence for each $v_j\in A\setminus \{v_{i+1}, v_{i+2}\}$, $\deg_{A \setminus \{v_{i+2}\}}(v_j) + 1 = \deg_{A}(v_j) + 1 \geq \deg_{\overline{A} }(v_j)=\deg_{\overline{A \setminus \{v_{i+2}\}}}(v_j)$. Furthermore, $\deg_{A \setminus \{v_{i+2}\}}(v_{i+1}) + 1 = 2 > 1 =\deg_{\overline{A \setminus \{v_{i+2}\}}}(v_{i+1})$. Hence, $A \setminus \{v_{i+2}\}$ is a defensive alliance and $A$ is not minimal. Therefore, we can assume that the connected components of $P_n[A]$ have at most 2 vertices. In fact, there are only two situations when a connected component of $P_n[A]$ has one vertex, as we show next.
    
Next, we claim that, if $v_1 \in A$, then $\{v_1\} = A$. For the sake of contradiction, assume $v_1,v_2 \in A$. Then, $v_3\notin A$ and $v_1$ is the only neighbor of $v_2$ in~$A$. Hence, $A \setminus \{v_2\}$ is also a defensive alliance. Therefore,  $v_1\in A$ has no neighbors in $A$. Hence, if $\{v_1\}\subsetneq A$, then $A \setminus \{v_1\}$ is also a defensive alliance. 
    
Analogously, if $v_n \in A$ then $\{v_n\} = A$. 
    
Furthermore, there is no other vertex $v_j\in A$ which is a connected component by itself, as otherwise $ \deg_{A }(v_j) + 1 <2= \deg_{\overline{A} }(v_j)$. 

For $n\in \mathbb{N}$, define the set of alliances $\mathcal{A}_n$ as $$\{ A\subseteq V_{n+2}\mid A\text{ is a locally minimal defensive alliance of }P_{n+2}:v_1,v_{n+2}\notin A  \}$$ and  $f_n \coloneqq \vert \mathcal{A}_n\vert$. Clearly, the initial conditions are correct.
\begin{toappendix}
More precisely, we can argue for the values of $f_1,f_2,f_3$ as follows:
\begin{itemize}
    \item $f_1=0$ ($\{v_2\}$ is not a defensive alliance of $P_3$),
    \item $f_2=1$ ($\{v_2, v_3\}$ is the only defensive alliance of $P_4$ fulfilling the property) and
    \item $f_3=2$ ($\{v_2, v_3\},\{v_3, v_4\}$ are the only defensive alliances of $P_5$ fulfilling the property).
\end{itemize}
\end{toappendix}
So assume $n\in \mathbb{N}$ with $n > 3$. For $v_2$, consider two possibilities: $v_2 \notin A$ or $v_2\in A$.

\smallskip\noindent
\textbf{Case 1:} $v_2 \notin A$. There is a bijection $B_{\text{out}}: \mathcal{A}_{n-1} \to  \mathcal{A}_n \cap 2^{V_{n+2} \setminus \{v_2\}}$ by $B_{\text{out}}(A)=\{v_{j+1} \mid v_j\in A\}$. The inverse function is given by $B_{\text{out}}^{-1}(A)=\{v_{j-1} \mid v_j\in A\}$. These functions are well-defined, as for each $A \in \mathcal{A}_{n-1}$ and $v_j\in A$, $\deg_{A}(v_j)= \deg_{B_{\text{out}}(A)}(v_{j+1})$ and $\deg_{\overline{A}}(v_j)= \deg_{\overline{B_{\text{out}}(A)}}(v_{j+1})$. Hence, the number of locally minimal defensive alliance $A\subseteq V_{n+2}$ on $P_{n+2}$ with $v_1,v_2,v_{n+2}\notin A$ is~$f_{n-1}$.       

\smallskip\noindent
\textbf{Case 2:} $v_2 \in A$. Since $v_1\notin A$ by definition of $ \mathcal{A}_n$, $v_3\in A$ and $v_4\notin A$ by  the observations from above. Then there are two subcases: $A=\{v_2,v_3\}$ and $\{v_2,v_3\}\subsetneq A$. We only need to consider the second subcase more thoroughly. Define $\mathcal{A}'_n \coloneqq \{A \in \mathcal{A}_n\mid v_2,v_3\in A \wedge A\setminus \{v_2,v_3\}\neq \emptyset\}$. Then there is a bijection $B_{\text{in}}: \mathcal{A}_{n-3} \to \mathcal{A}'_n$. This is a well-defined bijection working analogously to Case~1. In total, $f_n=f_{n-1}+f_{n-3}+1$, where the `$+1$' covers the case $A=\{v_2,v_3\}$. 

    For $n\in \mathbb{N}$, define $g_n=f_n+1$. Hence, $g_n=f_n + 1= f_{n-1}+f_{n-3}+1 + 1= g_{n-1} + g_{n-3}$. Thus, $f_n,g_n\in \mathcal{O}(1.4656^n)$.
\end{pf}

Also, the proof of the preceding combinatorial result also provides an algorithm for enumerating all locally minimal defensive alliances of a path, and this algorithm works with polynomial delay.

\begin{toappendix}

\section{Output-sensitive Enumeration of Defensive Alliances}
\label{sec:output-sensitive-minDA}

Our previous studies on extension variations of defensive alliance problems ruled out output-polynomial time algorithms that are based on naive branching algorithms. But there are other strategies to obtain output-polynomial time or even polyno\-mial-delay enumeration algorithms, for instance, one could define a distance function on the space of all minimal solutions and hope to `walk' through this space without erring too much when walking from one solution to another (new) one, see \cite{ConGMUV2022,KobKurWas2022,KobKurWas2022a,KobKMO2024,YuLLY2022} for recent references, or one could try to define an easily computable bijection between two solution spaces where one is known to be enumerable in output-polynomial time, see~\cite{ManFer2024} for a recent example.

One possibility to completely rule out any of the mentioned attempts would be to show a kind of reduction that tells us that if we could have such an enumeration algorithm for listing minimal alliances, we would be also in the position to list solutions in a space where this is known to be impossible.
Alas, to the best of our knowledge, no examples of such spaces are known. 
However, there is one very famous example that so far resisted all attempts to efficiently enumerate all solutions, also known as the \textsc{Hitting Set  Transversal Problem}. Equivalently stated, the question is if all minimal dominating sets of a graph can be enumerated in output-polynomial time.\footnote{Notice that for dominating sets, one need not distinguish between locally and globally minimal solutions due to the monotonicity of this property: any superset of a dominating set is dominating.} This question is open for more than four decades. It is quite important, as it appears in may application areas, and in particular in databases, there are quite a number of interesting equivalent problems, or problems that are shown \emph{transversal-hard}, as called in \cite{GogPapSid98}; 
we only mention two recent references and refer to the papers cited therein: \cite{BlaFLMS2022,BlaFriSch2022}.
The following theorems imply that enumerating globally minimal defensive alliances are at least as hard as enumerating minimal dominating sets.
To some extent, this result therefore tells us that not only our attempt to use efficient algorithms for solving some extension variant failed, based on $\NP\neq\ptime$, but also any other strategy is likely to fail, as this would solve an old open problem.

We will actually prove transversal hardness for two restricted scenarios: enumerating globally minimal
defensive alliances on bipartite graphs and on split graphs.  This is interesting insofar, as listing minimal dominating sets is also transversal-hard on bipartite graphs, but not on split graphs \cite{KanLMN2014}.

\begin{theorem}\label{thm:outpolyhard_bipartite}
If there would be an output-polynomial time algorithm for enumerating globally minimal
defensive alliances on bipartite graphs, then there would be an output-polynomial time algorithm for enumerating minimal dominating sets in general connected graphs. 
\end{theorem}

\begin{proof}
    Let $G=(V,E)$ be a connected graph. Define $V_i := \{ v_i\mid v \in V\}$ for $i \in [4]$, $V'_1:= \{v_{1,i}\mid v\in V, i\in [\deg(v) + 2]\}$, $V'_2:= \{v_{2,1},v_{2,2}\}$, and $V'_3:= \{v_{3,i}\mid v\in V, i\in [2\deg(v) + 3]\}$, $V'_4:= \{v_{4,i}\mid v\in V, i\in [\deg(v) + 2]\}$ as well as $G'=(V',E')$ with 
    \begin{equation*}
        \begin{split}
            V' \coloneqq{}& V_1 \cup V_2 \cup V_3 \cup V_4 \cup V_1' \cup V_2' \cup V_3' \cup V_4'\,,\\
            E' \coloneqq{}& \{\{v_i,v_{i,j}\}\mid i\in [4], v\in V, v_{i,j}\in V_i'\} \cup {}\\
            &\{\{v_2, u_1\},\{v_2,u_3\},\{v_4,u_3\} \mid v\in V, u\in N[v]\}.
        \end{split}
    \end{equation*}
     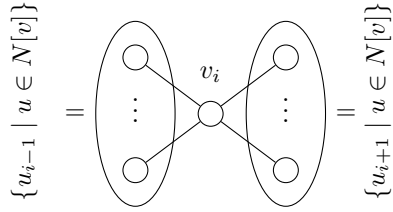
\begin{figure}[bt]
    \centering
    	
	\begin{tikzpicture}[transform shape]
		      \tikzset{every node/.style={ fill = white,circle,minimum size=0.3cm}}
            \node  at (3.7,0)[label={[rotate=90]above:$\{u_{i+1}\mid u\in N[v]\}$}]{};
            \node [label={above:$=$}] at (1.8,-0.5){};
            \node  at (-1,0)[label={[rotate=90]above:$\{u_{i-1}\mid u\in N[v]\}$}]{};
            \node [label={above:$=$}] at (-1.8,-0.5){};

			\node[draw,label={above:$v_i$}] (v2) at (0,0) {};
			\node[draw] (v11) at (-1,0.75) {};
			\node[draw] (v12) at (-1,-0.75) {};
			\node[draw] (v31) at (1,0.75) {};
			\node[draw] (v32) at (1,-0.75) {};
   
            \node at (1,0.15) {\vdots};
            \node at (-1,0.15) {\vdots};
            \draw (1,0) ellipse (15pt and 35pt);
            \draw (-1,0) ellipse (15pt and 35pt);			
            \path (v31) edge[-] (v2);
			\path (v32) edge[-] (v2);
			\path (v11) edge[-] (v2);
			\path (v12) edge[-] (v2);
        \end{tikzpicture}        
    \caption{\label{fig:outpolyhard_bipartite} Construction for \autoref{thm:outpolyhard_bipartite} ($v\in V, i\in \{2,3\}$).}
    \end{figure}
    Clearly, $G$ is bipartite with the classes $V_1 \cup V_3 \cup V_2' \cup V_4'$ and $V_2 \cup V_4 \cup V_1' \cup V_3'$. Furthermore, the graph has at most $ 4 \vert V\vert^2 + 13 \vert V\vert $ vertices.

    Let $A$ be a globally minimal defensive alliance. As the vertices in $ V'' \coloneqq V_1' \cup V_2' \cup V_3' \cup V_4'$ are pendant, $A\cap V''\neq \emptyset$ would imply $\vert A \vert = 1$. Thus, avoiding trivialities, we can assume   $A\cap V'' = \emptyset$. 

    \begin{claim}
         If $A\cap V'' = \emptyset$, then $V_2 \cup V_3 \cup V_4 \subseteq A$.  
    \end{claim}
    \begin{pfclaim} Recall that $V'\setminus V''=V_1\cup V_2 \cup V_3 \cup V_4$. We first show:\\     
    \underline{(A) $N_{V'\setminus V''}(A\cap (V_3\cup V_4))\subseteq A$.} \\
    Assume there exists a $v_4\in A \cap V_4$, then $\deg_{V''}(v_4) = \deg(v)+2 = \deg_{V' \setminus V''}(v_4) + 1$. Hence, $N_{V'\setminus V''}(v_4)\subseteq A$. Analogously, $v_3 \in A \cap V_3$ implies $N_{V'\setminus V''}(v_3)\subseteq A$ by $\deg_{V''}(v_3) = 2 \deg(v) + 3 = \deg_{V' \setminus V''}(v_3) + 1$. 
\\    
    \underline{(B) If $(V_3 \cup V_4)\cap A\neq \emptyset$, then $V_2 \cup V_3 \cup V_4 = N_{V'\setminus V''}[V_3 \cup V_4] \subseteq A$.}\\
    Assume there exists a $v_4\in A \cap V_4$ (or, equivalently, $v_3\in A\cap V_3$) and consider any $u\in V$. We show by induction on the distance between $u$ and $v$ in~$G$ that then, $u_i\in V_i\cap A$ for $i=2,3,4$. This is trivial when the distance is zero by the construction of~$G'$. Assume the claim is true for all vertices at distance $k$ and consider a vertex~$u$ at distance $k+1$. 
    Since $G$ is connected, there is a shortest path $p=v,\dots,u',u$ from $v$ to $u$ in~$G$. By  the induction hypothesis, $u'_3\in V_3\cap A$ and $u'_4\in V_4\cap A$. As $u\in N_G(u')$, by construction and (A), $u_i\in A$ for $i=2,3,4$.
\\
\underline{(C) $N_{V_2}(V_1\cap A)\subseteq A$.}\\ 
Consider some $v_1\in A \cap V_1$. Observing $\deg_{V''}(v_1) = \deg(v) + 2 = \deg_{V' \setminus V''}(v_1) + 1$ implies  $N_{V_2}(v_1) \subseteq A$.
\\
\underline{(D) If $A\cap V_2\neq\emptyset$, then $A\cap A_3\neq \emptyset$.}\\ 
    Assume there is a $v_2\in A \cap V_2$ and $A\cap (V'' \cup V_3 \cup V_4)=\emptyset$. Then $\deg_{V''\cup V_3}(v_2) = \deg(v) + 3 > \deg(v) + 2 = \deg_{V'\setminus (V' \cup V_3)}(v_2)$. Hence, if $A \cap V'' = \emptyset$, then $V_3 \cap A\neq \emptyset$. 

Now, we are ready to prove the claim. We assume that $A\cap V'' = \emptyset$. Hence, some alliance vertex~$x$ lies in $V'\setminus V''=V_1\cup V_2 \cup V_3 \cup V_4$, as minimal alliances are non-empty.
If $x\in V_1$, then by (C) and as $G$ is connected, $A\cap V_2\neq \emptyset$. If $x\in V_2$, then by (D), $A\cap V_3\neq \emptyset$. If $x\in V_3\cup V_4$, $V_2 \cup V_3 \cup V_4 \subseteq A$ follows by (B).  
    \end{pfclaim}

By this claim, any defensive alliance $A\subseteq V_1\cup V_2\cup V_3\cup V_4$ is determined by the the set $A\cap V_1$. We link this to the dominating sets of~$G$ in the following.
    For $D \subseteq V$, define $A_D:=\{v_1\mid v\in D \} \cup V_2 \cup V_3 \cup V_4$. There is a trivial bijection between $D$ and $A_D\cap V_1$.

    \begin{claim}
        Let $D \subseteq V$. $D$ is a dominating set of~$G$ if and only if $A_D$ is a defensive alliance of~$G'$.
    \end{claim}
    \begin{pfclaim}
        Assume $D$ is not a  dominating set. Then there exists a $v_2\in A_D$ such that $N_{A_D}(v_2) \cap V_1 = \emptyset$ and $\deg_{A_D}(v_2) + 1 = \deg(v)+2< \deg(v) + 3 =\deg_{V'\setminus A_D}(v_2)$. This would imply that $A_D$ is not a defensive alliance. 

        Assume $D$ is a dominating set. For $v\in V$, $\deg_{A_D}(v_1) + 1= \deg(v)+2 = \deg_{V' \setminus A_D}(v_1)$, $\deg_{A_D}(v_3) + 1= 2$,  $\deg(v) + 3 = \deg_{V'\setminus A_D }(v_3)$ and $\deg_{A_D}(v_4) = \deg(v)+2 = \deg_{V' \setminus A_D}(v_4) + 1$. Thus, we only need to consider $v_2 \in V_2$ for $v\in V$. Since $D$ is a dominating set, there exists a $u_1\in N_{A_D \cap V_1}(v_2)$. Hence $\deg_{A_D}(v_2) + 1 \geq \deg(v) + 3 > \deg(v)+2 \geq \deg_{V' \setminus A_D}(v_2)$. 
    \end{pfclaim}

    By this claim, we know that for each minimal dominating set of~$G$, there exists a globally minimal defensive alliance of~$G'$. Furthermore, we have shown that either, there exists a $p\in V''$ such that $A=\{p\}$ (at most $4 \vert V\vert^2 +9\vert V\vert$ many tests) or, $V_2\cup V_3 \cup V_4\subseteq A \subseteq V' \setminus V''= V_1 \cup V_2\cup V_3 \cup V_4$. In the last case, there exists a $D\subseteq V$ such that $A=A_D$. \qed  
\end{proof}

It should be mentioned that this transformation does not bring any results for enumerating locally minimal defensive alliances. The problem is that there are locally minimal defensive alliances $A\subseteq V'$ with $A\cap V''\neq \emptyset$ and $A \nsubseteq V''$. For example, let $I \subseteq V$ be, an irredundant set (i.e., each vertex in $I$ has a private neighbor) of~$G$. Then $A_I':=\{v_1\mid v\in V\} \cup \{v_{2,1}\mid v\in V\setminus N[I]\} \cup V_2 \cup V_3 \cup V_4$. Analogously to the proof above, $A_I'$ is a defensive alliance.

Assume $A_I'$ is not locally minimal. As for each $v_i \in \{v_1\mid v\in V\} \cup V_3 \cup V_4$, $\deg_{A_I'}(v_i)+1=\deg_{V'\setminus A_I'}(v_i)$, we cannot delete any vertex from $V_2 \cup V_3 \cup V_4$ from $A_I'$ without losing the defensive alliance property. Let $v_1\in A_I'\cap V_1$, so $v\in I$. As $v$ has a private neighbor $u\in N[u] \setminus N[I\setminus \{v\}]$, $\deg_{A_I' \setminus \{v_1\}}(u_2)+1 = \deg(u) + 2 < \deg(u) + 3 = \deg_{V' \setminus (A_I' \setminus \{v_1\})}(u_2)+1$. This would contradict the defensive alliance property. Analogously, $A_I' \setminus \{v_{2,1}\}$ is  no defensive alliance for $v_{2,1}\in\{v_{2,1}\mid v\in V\setminus N[I]\}$. Hence, $A_I'$ is a locally minimal defensive alliance. There are even more locally minimal defensive alliances.

We conclude this section by adding a further (similar) reduction concerning enumeration of globally minimal
defensive alliances, now on split graphs instead of bipartite graphs.

\begin{theorem}\label{thm:outpolyhard_split}
If there would be an output-polynomial time algorithm for enumerating globally minimal
defensive alliances on split graphs, then there would be an output-polynomial time algorithm for enumerating minimal dominating sets in general graphs. 
\end{theorem}

\begin{pf}
    Let $G=(V,E)$ be a graph. Define $V_i:=\{ v_i \mid v \in V\}$ for $i\in [7]$, $\ell = 6 \cdot \vert V \vert+4$ and $M\coloneqq\{ m_{x,i} \mid x \in V_3 \cup V_4 \cup \{b,c\}, i\in [\ell]\}$ as well as $G'=(V',E')$ with $V'\coloneqq C \cup I$, where
    \begin{equation*}
        \begin{split}
            I \coloneqq{}& V_1  \cup V_6 \cup V_7 \cup M,  \\
            C \coloneqq{}& V_2  \cup V_3\cup V_4 \cup V_5 \cup \{a, b, c, d\}, \\
            E' \coloneqq{}& \binom{C}{2} \cup\{\{v_6, u_2\},\{v_6, u_3\}\mid \{v,u\}\in E\} \cup{} \\
            &\{\{v_6,a\},\{v_6,b\},\{v_7,a\} \mid v\in V\}  \cup\{\{v_1,u_2\},\{v_1,u_3\}\mid v,u\in V, u\in N[v]\}  \cup{}\\ & \{\{m_{x,i},x\} \mid x\in  V_3 \cup V_4 \cup \{b,c\} , i\in [\ell] \}. \\
        \end{split}
    \end{equation*}
    
     \begin{figure}[bt]
    \centering
    	
	\begin{tikzpicture}[transform shape]
		      \tikzset{every node/.style={ fill = white,circle,minimum size=0.3cm}}
            \node  at (4.7,0)[label={[rotate=90]above:$\{u_{6}\mid \{v,u\} \in E \}$}]{};
            \node [label={above:$=$}] at (2.8,-0.5){};
            \node  at (-2,0)[label={[rotate=90]above:$\{u_{1}\mid u\in N[v]\}$}]{};
            \node [label={above:$=$}] at (-2.8,-0.5){};

			\node[draw,label={above:$v_2$}] (v2) at (0,0.5) {};
			\node[draw,label={above:$v_3$}] (v3) at (0,-0.5) {};
			\node[draw,label={above:$v_7$}] (v7) at (-1.5,1.5) {};
			\node[draw,label={above:$a$}] (a) at (0,1.5) {};
			\node[draw,label={above:$b$}] (b) at (0,-1.5) {};
			\node[draw] (v11) at (-2,0.75) {};
			\node[draw] (v12) at (-2,-0.75) {};
			\node[draw] (v61) at (2,0.75) {};
			\node[draw] (v62) at (2,-0.75) {};
   
            \node at (2,0.15) {\vdots};
            \node at (-2,0.15) {\vdots};
            \draw (2,0) ellipse (15pt and 35pt);
            \draw (-2,0) ellipse (15pt and 35pt);			
            \path (v61) edge[-] (v2);
			\path (v62) edge[-] (v2);
			\path (v11) edge[-] (v2);
			\path (v12) edge[-] (v2);	
   
            \path (v61) edge[-] (v3);
			\path (v62) edge[-] (v3);
			\path (v11) edge[-] (v3);
			\path (v12) edge[-] (v3);
   			
            \path (v61) edge[-] (b);
			\path (v62) edge[-] (b);
			\path (v61) edge[-] (a);
			\path (v62) edge[-] (a);
			\path (v7) edge[-] (a);
        \end{tikzpicture}        
    \caption{\label{fig:outpolyhard_split} Construction for \autoref{thm:outpolyhard_split} ($v\in V$). We did not include the edges in $\binom{C}{2}$ and $M=\{\{m_{x,i},x\} \mid x\in  V_3 \cup V_4 \cup \{b,c\} , i\in [\ell] \}$.}
    \end{figure}
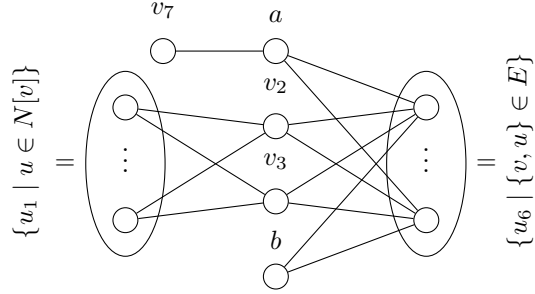
    Clearly, as $C$ is a clique and $I$ is an independent set, $G'$ is a split graph.
    
    Let $A\subseteq V'$ (with $A\neq \emptyset$) be a defensive alliance. As the vertices in $V_7 \cup M$ are pendant, $A\cap (V_7 \cup M)\neq \emptyset$ would imply $\vert A \vert = 1$. Hence, we assume $A\cap (V_7 \cup M) = \emptyset$. Since $\deg_{V_7 \cup M}(x) = 6 \vert V \vert + 4 > 4 \vert V\vert + 2\deg(v) + 4 \geq \deg_{\overline{V_5 \cup M}}(x) + 1$ for $x\in V_3\cup V_4\cup \{b,c\}$, $A\cap (V_3 \cup V_4 \cup \{ b, c \}) =\emptyset$. Define $W:= V_3 \cup V_4 \cup V_7 \cup \{ b, c \} \cup M$. We have seen that for any defensive alliance~$A$ with at least two vertices, $A\subseteq V'\setminus W= V_1\cup V_2 \cup V_5 \cup V_6 \cup \{a,d\}.$ 
    We will follow a similar proof structure as in the previous proof to show that our reduction works.

    \begin{claim}
        If $A\subseteq V'\setminus W$, then $V_2\cup V_5 \cup V_6 \cup \{a,d\}\subseteq A$. 
    \end{claim}

\begin{pfclaim} As $A\subseteq V'\setminus W= V_1\cup V_2 \cup V_5 \cup V_6 \cup \{a,d\}$, we have to discuss cases again, since $A\neq\emptyset$.
\\
\underline{(A) $N_{V_2\cup\{a\}}((V_1\cup V_6)\cap A)\subseteq A$.}
\\
    Assume there is a $v_1\in V_1 \cap A$. Then $\deg_{W}(v_1)=\deg(v)+1$ and $\deg_{\overline{W}}(v_1) + 1=\deg(v)+2$. This implies $N_{V_2}(v_1) \subseteq A$. Analogously, $v_6 \in V_6 \cap A$ implies $N_{V_2}(v_6) \cup \{ a \} \subseteq A$. 
\\
\underline{(B) $A\cap V_2\neq \emptyset$ implies $A\cap (\{a,d\} \cup V_5)\neq \emptyset$.}\\
    Assume $A \cap (V_5 \cup \{a,d\})= \emptyset$ but there is a $v_2\in A\cap V_2$. Then $\deg_{\overline{A}}(v_2) \geq 3 \vert V\vert + 4 > 2\deg(v) +\vert V\vert + 1 \geq \deg_A(v_2)$. This contradicts the defensive alliance property. Thus, $A\cap V_2\neq \emptyset$ implies $A\cap (\{a,d\} \cup V_5)\neq \emptyset$. 
    \\
\underline{(C) $A\cap (\{a,d\} \cup V_5)\neq \emptyset$ implies $V_2 \cup \{ a \} \subseteq A$.}\\
    If there is a $y\in (\{d\} \cup V_5) \cap A$, then $\deg_W(y) = 2 \vert V\vert + 2$ and $\deg_{\overline{W}}(y) + 1 = 2 \vert V\vert + 2$ imply $V_2 \cup \{ a \} \subseteq A$.
        \\
\underline{(D) $a\in A$ implies $V_2\cup V_5 \cup V_6 \cup \{d\} \subseteq A$.}\\If $a\in A$ and $V_2\cup V_5 \cup V_6 \cup \{d\} \nsubseteq A$, then $\deg_{\overline{A}}(a) \geq 3 \vert V\vert + 3 > 3 \vert V\vert + 2\geq \deg_A(a)$ contradicts the defensive alliance property.

Consider $A\subseteq V_1\cup V_2 \cup V_5 \cup V_6 \cup \{a,d\}$.
By (D), if $a\in A$, the claim $V_2\cup V_5 \cup V_6 \cup \{a,d\}\subseteq A$ immediately follows. If  $A\cap (\{d\} \cup V_5)\neq \emptyset$, then (C) shows $a\in A$ and hence the claim. If $A\cap V_2\neq \emptyset$, then (B) proves that the preconditions of (C) are satisfied and thus the claim is true. Finally, if $A\cap (V_1\cup V_6)\neq \emptyset$, then (A) shows that the preconditions of (B) are satisfied, so that again  the claim is true.
Hence, $A\setminus W \neq \emptyset$ implies $V_2\cup V_5 \cup V_6 \cup \{a,d\}\subseteq A$.   
\end{pfclaim}

    So again, any non-trivial alliance~$A$ is determined by the set $A\cap V_1$.  
     For $D \subseteq V$, define $A_D:= \{v_1\mid v\in D\} \cup V_2 \cup V_5 \cup V_6 \cup \{a,d\}.$  
    \begin{claim}
        Let $D\subseteq V$. $D$ is a dominating set of~$G$ if and only if $A_D$ is a defensive alliance of~$G'$.
    \end{claim}
    \begin{pfclaim}
        Let $D$ be a dominating set. Then by the argumentation from above we only need to consider $v_2\in V_2$. As $D$ is a dominating set, there is a $u_1\in N_{V_1 \cap A_D}(v_2)$. Hence, $\deg_{A_D}(v_2) + 1 \leq 2\vert V\vert + \deg(v) + 3 > \deg_{\overline{A_D}}(v_2)$. Thus, $A_D$ is a defensive alliance. 

        Assume $D$ is no dominating set. Then there exists a $v\in V$ such that $N[v]\cap D = \emptyset$. Hence, $\deg_{A_D}(v_2) + 1 = 2 \vert V \vert + \deg(v) + 2 = 2 \vert V \vert + \deg(v) + 3$ implies that $A_D$ is no defensive alliance.        
    \end{pfclaim}
        By this claim, we know that for each minimal dominating set of~$G$ there exists a globally minimal defensive alliance of~$G'$. Furthermore, we have shown that either there exists a $p\in V_7 \cup M$ such that $A=\{p\}$ (at most $12 \vert V\vert^2 +21\vert V\vert +8$ many tests) or $V_2\cup V_5 \cup V_6 \cup \{a,d\} \subseteq A \subseteq V' \setminus W = V_1 \cup V_2\cup V_5 \cup V_6 \cup \{a,d\}$. In the last case, there exists a $D\subseteq V$ such that $A=A_D$. 
\end{pf}

\section{Enumerating All Defensive Alliances With Polynomial Delay}
\label{sec:enum-all-DA}
As we have seen, the enumeration of all minimal defensive alliances is a difficult task in several ways. We also know that in general, we can expect many locally and also globally minimal defensive alliances in a graph, even as many as $\mathcal{O}(2^n)$ in an $n$-vertex graph. As being a defensive alliance is a non-monotone property, it is even not clear if all defensive alliances (again, we can expect $\mathcal{O}(2^n)$ many) can be enumerated with polynomial delay. Observe that this task becomes trivial in the case of dominating sets, for instance, because of monotonicity.
However, we can show that the trivial branching algorithm that decides for each vertex whether it belongs to a defensive alliance or not can be supplemented by a test procedure that guarantees polynomial delay.

\begin{lemma}
There is a polynomial-time algorithm $\textsf{Test-DA}$ that, given a graph $G=(V,E)$ and two disjoint vertex sets $U,N$, answers \yes if and only if there exists a defensive alliance $A\subseteq V$ such that $U\subseteq A$ and $A\cap N=\emptyset$. 
\end{lemma}

\begin{pf}The algorithm $\textsf{Test-DA}$ works as follows:
It first sets $X\coloneqq V\setminus N$. Clearly, $U\subseteq X$.
If $X$ is a defensive alliance, then it answers \yes.
Otherwise, if there exists some $x\in X\setminus U$ that violates the defensive alliance property, i.e., $\deg_{V\setminus X}(x)>\deg_X(x)+1$, then we recursively call $\textsf{Test-DA}(G,U,N\cup\{x\})$; else (i.e., all $x\in X$ that violate the defensive alliance property belong to $U$, which is true in particular if $X=U$) return \no.

One can prove that the procedure will find not just any alliance $A\subseteq V$ such that $U\subseteq A$ and $A\cap N=\emptyset$, but this will be the largest possible, and, moreover, it is uniquely determined, as it is independent on the choice of the vertex   $x\in X\setminus U$ in the course of the algorithm.
Namely, if there are two vertices $a,b\in X\setminus U$ that violate the defensive alliance property, then $b$ will still violate this property after moving $a$ into~$N$ in the recursive call. This is because
$$\deg_{\overline{X\setminus \{a\}}}(b)\geq{} \deg_{\overline{X}}(b)>\deg_X(b)+1\geq \deg_{X\setminus \{a\}}+1
$$
As the defensive alliance property can be checked in polynomial time, the claim follows.
\end{pf}

This lemma entails, by using the mentioned trivial branching algorithm that will build up the sets $U$ and $N$ needed as arguments of the procedure $\textsf{Test-DA}$, the following result.
\begin{theorem}
There is an enumeration algorithm that lists all defensive alliances of a given graph~$G$ with polynomial delay and in polynomial space.
\end{theorem}

\section{Conclusions}
\label{sec:conclusions}

We have shown enumeration algorithms for two variations of minimal defensive alliances that are simple yet optimal from an input-sensitive perspective. The \NP-hardness of the corresponding extension problem (also in quite restrictive settings, like degree-bounded planar graphs) proves that we need other techniques to get output-polynomial enumeration algorithms. This is clearly not ruled out so far and is an interesting open question. In the case of the global problem variation, we prove a link to the famous \textsc{Hitting Set Transversal Problem}. If such a link exists for the local problem variation is again open. 
A more modest goal would be to look for polynomial-time algorithms for the extension problem in graph classes similar as described in \autoref{thm:Enum_global_general}. 

Alliance theory offers quite a number of variants of alliances, for instance, offensive or dual (powerful) or global alliances. There also exist relaxations on the very definition of defensive alliances.
For none of these variations, enumeration has ever been studied. This leaves ample room for further research.
    
\end{toappendix}

\paragraph{Acknowledgements.} We are grateful for the support of Zhidan Feng through China Scholarship Council. 

\bibliographystyle{splncs04}
\bibliography{ab,hen,ref}

\begin{thebibliography}{10}
\providecommand{\url}[1]{\texttt{#1}}
\providecommand{\urlprefix}{URL }
\providecommand{\doi}[1]{https://doi.org/#1}

\bibitem{AbuFGLM2022a}
Abu-Khzam, F.N., Fernau, H., Gras, B., Liedloff, M., Mann, K.: Enumerating
  minimal connected dominating sets. In: Chechik, S., Navarro, G., Rotenberg,
  E., Herman, G. (eds.) 30th Annual European Symposium on Algorithms, ESA.
  Leibniz International Proceedings in Informatics (LIPIcs), vol.~244, pp.
  1:1--1:15. Schloss Dagstuhl -- Leibniz-Zentrum f{\"u}r Informatik (2022)

\bibitem{AbuHeg2016}
Abu{-}Khzam, F.N., Heggernes, P.: Enumerating minimal dominating sets in
  chordal graphs. Information Processing Letters  \textbf{116}(12),  739--743
  (2016)

\bibitem{BazFerTuz2019}
Bazgan, C., Fernau, H., Tuza, Z.: Aspects of upper defensive alliances.
  Discrete Applied Mathematics  \textbf{266},  111--120 (2019)

\bibitem{BlaFLMS2022}
Bl{\"{a}}sius, T., Friedrich, T., Lischeid, J., Meeks, K., Schirneck, M.:
  Efficiently enumerating hitting sets of hypergraphs arising in data
  profiling. Journal of Computer and System Sciences  \textbf{124},  192--213
  (2022)

\bibitem{BlaFriSch2022}
Bl{\"{a}}sius, T., Friedrich, T., Schirneck, M.: The complexity of dependency
  detection and discovery in relational databases. Theoretical Computer Science
   \textbf{900},  79--96 (2022)

\bibitem{BliWol2018}
Bliem, B., Woltran, S.: Defensive alliances in graphs of bounded treewidth.
  Discrete Applied Mathematics  \textbf{251},  334--339 (2018)

\bibitem{BorEGK2002}
Boros, E., Elbassioni, K.M., Gurvich, V., Khachiyan, L.: Generating
  dual-bounded hypergraphs. Optimization Methods and Software  \textbf{17}(5),
  749--781 (2002)

\bibitem{BorGKM2000}
Boros, E., Gurvich, V., Khachiyan, L., Makino, K.: Dual-bounded generating
  problems: Partial and multiple transversals of a hypergraph. {SIAM} Journal
  on Computing  \textbf{30}(6),  2036--2050 (2000)

\bibitem{CasFGMS2022}
Casel, K., Fernau, H., Ghadikolaei, M.K., Monnot, J., Sikora, F.: On the
  complexity of solution extension of optimization problems. Theoretical
  Computer Science  \textbf{904},  48--65 (2022)

\bibitem{ConGMUV2022}
Conte, A., Grossi, R., Marino, A., Uno, T., Versari, L.: Proximity search for
  maximal subgraph enumeration. {SIAM} Journal on Computing  \textbf{51}(5),
  1580--1625 (2022)

\bibitem{DomMisPit99}
Domingo, C., Mishra, N., Pitt, L.: Efficient read-restricted monotone {CNF/DNF}
  dualization by learning with membership queries. Mach. Learn.
  \textbf{37}(1),  89--110 (1999)

\bibitem{EitGot95}
Eiter, T., Gottlob, G.: Identifying the minimal transversals of a hypergraph
  and related problems. {SIAM} Journal on Computing  \textbf{24}(6),
  1278--1304 (1995)

\bibitem{EitGotMak2003}
Eiter, T., Gottlob, G., Makino, K.: New results on monotone dualization and
  generating hypergraph transversals. {SIAM} Journal on Computing
  \textbf{32}(2),  514--537 (2003)

\bibitem{FerRai07}
Fernau, H., Raible, D.: Alliances in graphs: a complexity-theoretic study. In:
  Leeuwen, J., Italiano, G.F., Hoek, W., Meinel, C., Sack, H., Pl\'a\v{s}il,
  F., Bielikov\'a, M. (eds.) SOFSEM 2007, Proceedings Vol. II. pp. 61--70.
  Institute of Computer Science ASCR, Prague (2007)

\bibitem{FerRod2014a}
Fernau, H., Rodr\'{\i}guez-Vel\'{a}zquez, J.A.: A survey on alliances and
  related parameters in graphs. Electronic Journal of Graph Theory and
  Applications  \textbf{2}(1),  70--86 (2014)

\bibitem{FriLHHH2003}
Fricke, G., Lawson, L., Haynes, T.W., Hedetniemi, S.M., Hedetniemi, S.T.: A
  note on defensive alliances in graphs. Bulletin of the Institute of
  Combinatorics and its Applications  \textbf{38},  37--41 (2003)

\bibitem{GaiMai2022}
Gaikwad, A., Maity, S.: Defensive alliances in graphs. Theoretical Computer
  Science  \textbf{928},  136--150 (2022)

\bibitem{GaiMai2022a}
Gaikwad, A., Maity, S.: Globally minimal defensive alliances. Information
  Processing Letters  \textbf{177},  106253 (2022)

\bibitem{GaiMaiTri2021}
Gaikwad, A., Maity, S., Tripathi, S.K.: Parameterized complexity of defensive
  and offensive alliances in graphs. In: Goswami, D., Hoang, T.A. (eds.)
  Distributed Computing and Internet Technology - 17th International
  Conference, {ICDCIT}. LNCS, vol. 12582, pp. 175--187. Springer (2021)

\bibitem{GaiMaiTri2021b}
Gaikwad, A., Maity, S., Tripathi, S.K.: Parameterized complexity of locally
  minimal defensive alliances. In: Mudgal, A., Subramanian, C.R. (eds.)
  Algorithms and Discrete Applied Mathematics - 7th International Conference,
  {CALDAM}. LNCS, vol. 12601, pp. 135--148. Springer (2021)

\bibitem{Gan2015}
Ganian, R.: Improving vertex cover as a graph parameter. Discrete Mathematics
  {\&} Theoretical Computer Sience  \textbf{17}(2),  77--100 (2015)

\bibitem{GogPapSid98}
Gogic, G., Papadimitriou, C.H., Sideri, M.: Incremental recompilation of
  knowledge. Journal of Artificial Intelligence Research  \textbf{8},  23--37
  (1998)

\bibitem{GolHKS2020}
Golovach, P.A., Heggernes, P., Kratsch, D., Saei, R.: Enumeration of minimal
  connected dominating sets for chordal graphs. Discrete Applied Mathematics
  \textbf{278},  3--11 (2020)

\bibitem{HayHedHen2021}
Haynes, T.W., Hedetniemi, S.T., Henning, M.A.: Structures of Domination in
  Graphs, Developments in Mathematics, vol.~66. Springer (2021)

\bibitem{JamHedMcC2009}
Jamieson, L.H., Hedetniemi, S.T., McRae, A.A.: The algorithmic complexity of
  alliances in graphs. Journal of Combinatorial Mathematics and Combinatorial
  Computing  \textbf{68},  137--150 (2009)

\bibitem{KanLMN2014}
Kant{\'e}, M.M., Limouzy, V., Mary, A., Nourine, L.: On the enumeration of
  minimal dominating sets and related notions. {SIAM} Journal of Discrete
  Mathematics  \textbf{28}(4),  1916--1929 (2014)

\bibitem{KanUno2017}
Kant{\'{e}}, M.M., Uno, T.: Counting minimal dominating sets. In: Gopal, T.V.,
  J{\"{a}}ger, G., Steila, S. (eds.) Theory and Applications of Models of
  Computation - 14th Annual Conference, {TAMC}. LNCS, vol. 10185, pp. 333--347
  (2017)

\bibitem{KavPapSid93}
Kavvadias, D.J., Papadimitriou, C.H., Sideri, M.: On {H}orn envelopes and
  hypergraph transversals. In: Ng, K., Raghavan, P., Balasubramanian, N.V.,
  Chin, F.Y.L. (eds.) Algorithms and Computation, 4th International Symposium,
  {ISAAC}. LNCS, vol.~762, pp. 399--405. Springer (1993)

\bibitem{Kimetal2005}
Kim, B.J., Liu, J., Um, J., Lee, S.I.: Instability of defensive alliances in
  the predator-\linebreak[3]prey model on complex networks. Physical Reviews E
  \textbf{72}(4),  041906 (2005)

\bibitem{KiyOta2017}
Kiyomi, M., Otachi, Y.: Alliances in graphs of bounded clique-width. Discrete
  Applied Mathematics  \textbf{223},  91--97 (2017)

\bibitem{KobKMO2024}
Kobayashi, Y., Kurita, K., Matsuo, Y., Ono, H.: Enumerating minimal vertex
  covers and dominating sets with capacity and/or connectivity constraints. In:
  Rescigno, A.A., Vaccaro, U. (eds.) Combinatorial Algorithms (Proceeding 35th
  International Workshop on Combinatorial Algorithms IWOCA). LNCS, vol.~??,
  p.~?? Springer (2024)

\bibitem{KobKurWas2022a}
Kobayashi, Y., Kurita, K., Wasa, K.: Linear-delay enumeration for minimal
  {S}teiner problems. In: Libkin, L., Barcel{\'{o}}, P. (eds.) {PODS} '22:
  International Conference on Management of Data. pp. 301--313. {ACM} (2022)

\bibitem{KobKurWas2022}
Kobayashi, Y., Kurita, K., Wasa, K.: Polynomial-delay and polynomial-space
  enumeration of large maximal matchings. In: Bekos, M.A., Kaufmann, M. (eds.)
  Graph-Theoretic Concepts in Computer Science - 48th International Workshop,
  {WG}. LNCS, vol. 13453, pp. 342--355. Springer (2022)

\bibitem{Kou2013}
Kouteck{\'{y}}, M.: Solving hard problems on Neighborhood Diversity. Master
  thesis, Charles University in Prague, Chech Republic, Facultas Mathematica
  Physicaque, Department of Applied Mathematics (Apr 2013)

\bibitem{KriHedHed2004}
Kristiansen, P., Hedetniemi, S.M., Hedetniemi, S.T.: Alliances in graphs.
  Journal of Combinatorial Mathematics and Combinatorial Computing
  \textbf{48},  157--177 (2004)

\bibitem{Lam2012}
Lampis, M.: Algorithmic meta-theorems for restrictions of treewidth.
  Algorithmica  \textbf{64}(1),  19--37 (2012)

\bibitem{Lar93}
Laroche, P.: Satisfiabilité de l-parmi-3 planaire est {NP}-complet. Comptes
  rendus de l'Académie des sciences. Série 1, Mathématique  \textbf{316},
  389--392 (1993)

\bibitem{ManFer2024}
Mann, K., Fernau, H.: Perfect {R}oman domination: Aspects of enumeration and
  parameterization. In: Rescigno, A.A., Vaccaro, U. (eds.) Combinatorial
  Algorithms (Proceeding 35th International Workshop on Combinatorial
  Algorithms IWOCA). LNCS, vol.~??, p.~?? Springer (2024)

\bibitem{ManRai87}
Mannila, H., R{\"{a}}ih{\"{a}}, K.: Dependency inference. In: Stocker, P.M.,
  Kent, W., Hammersley, P. (eds.) VLDB'87, Proceedings of 13th International
  Conference on Very Large Data Bases. pp. 155--158. Morgan Kaufmann (1987)

\bibitem{Mar2013a}
Mary, A.: \'Enum\'eration des dominants minimaux d'un graphe. Ph.D. thesis,
  LIMOS, Universit\'e Blaise Pascal, Clermont-Ferrand, France (Nov 2013)

\bibitem{MooRob2001}
Moore, C., Robson, J.M.: Hard tiling problems with simple tiles. Discrete \&
  Computational Geometry  \textbf{26}(4),  573--590 (2001)

\bibitem{OuaSliTar2018}
Ouazine, K., Slimani, H., Tari, A.: Alliances in graphs: Parameters, properties
  and applications---a survey. AKCE International Journal of Graphs and
  Combinatorics  \textbf{15}(2),  115--154 (2018)

\bibitem{Rei87}
Reiter, R.: A theory of diagnosis from first principles. Artificial
  Intelligence  \textbf{32},  57--95 (1987)

\bibitem{SebLagKhe2012}
Seba, H., Lagraa, S., Kheddouci, H.: Alliance-based clustering scheme for group
  key management in mobile ad hoc networks. The Journal of Supercomputing
  \textbf{61}(3),  481--501 (2012)

\bibitem{Sha2004}
Shafique, K.H.: Partitioning a graph in alliances and its application to data
  clustering. Phd thesis, School of Computer Science, University of Central
  Florida, Orlando, USA (2004)

\bibitem{SzaCza2001}
Szab\'o, G., Cz\'ar\'an, T.: Defensive alliances in spatial models of cyclical
  population interactions. Physical Review E  \textbf{64}(4),  042902 (Sep
  2001)

\bibitem{Was2016}
Wasa, K.: Enumeration of enumeration algorithms. Tech. Rep. 1605.05102, ArXiv,
  Cornell University (2016)

\bibitem{Wot2001}
Wotawa, F.: A variant of {R}eiter's hitting set algorithm. Information
  Processing Letters  \textbf{79},  45--51 (2001)

\bibitem{YerRod2017}
Yero, I.G., Rodr\'iguez-Vel\'azquez, J.A.: A survey on alliances in graphs:
  defensive alliances. Utilitas Mathematica  \textbf{105},  141--172 (2017)

\bibitem{YuLLY2022}
Yu, K., Long, C., Liu, S., Yan, D.: Efficient algorithms for maximal $k$-biplex
  enumeration. In: Ives, Z.G., Bonifati, A., Abbadi, A.E. (eds.) {SIGMOD} '22:
  International Conference on Management of Data. pp. 860--873. {ACM} (2022)

\end{thebibliography}

\end{document}